\newcommand{\av}{\mbox{\boldmath$\alpha$}}
\newcommand{\bv}{\mbox{\boldmath$\beta$}}
\newcommand{\sbv}{\mbox{\boldmath$b$}}
\newcommand{\bvs}{\mbox{\boldmath$\beta$}_{\rm sys}}
\newcommand{\bva}{\mbox{\boldmath$\beta$}_{\rm all}}
\newcommand{\hbva}{\mbox{\boldmath$\hat \beta$}_{\rm all}}
\newcommand{\hbvs}{\mbox{\boldmath$\hat \beta$}_{\rm sys}}
\newcommand{\bvc}{\mbox{\boldmath$\beta$}_c}
\newcommand{\bvg}{\mbox{\boldmath$\beta$}_g}
\newcommand{\tbvc}{\mbox{\boldmath$\tilde \beta$}_c}
\newcommand{\hbvg}{\mbox{\boldmath$\hat \beta$}_g}
\newcommand{\ev}{\mbox{\boldmath$e$}}
\newcommand{\Bv}{\mbox{\boldmath$B$}}
\newcommand{\Dv}{\mbox{\boldmath$D$}}
\newcommand{\Ev}{\mbox{\boldmath$E$}}
\newcommand{\Gv}{\mbox{\boldmath$G$}}
\newcommand{\Iv}{\mbox{\boldmath$I$}}
\newcommand{\gav}{\mbox{\boldmath$\gamma$}}
\newcommand{\Pv}{\mbox{\boldmath$P$}}
\newcommand{\Qv}{\mbox{\boldmath$Q$}}
\newcommand{\SSv}{\mbox{\boldmath$S$}}
\newcommand{\Sigv}{\mbox{\boldmath$\Sigma$}}
\newcommand{\SSigv}{\mbox{\boldmath$\mathcal{E}$}}
\newcommand{\hSigv}{\mbox{\boldmath$\check{\mathcal{E}}$}}
\newcommand{\Xv}{\mbox{\boldmath$X$}}
\newcommand{\XXv}{\mbox{\boldmath$\mathcal{X}$}}
\newcommand{\Uv}{\mbox{\boldmath$U$}}
\newcommand{\Wv}{\mbox{\boldmath$W$}}
\newcommand{\Vv}{\mbox{\boldmath$V$}}
\newcommand{\Yv}{\mbox{\boldmath$Y$}}
\newcommand{\YYv}{\mbox{\boldmath$\mathcal{Y}$}}
\newcommand{\Zv}{\mbox{\boldmath$Z$}}
\newcommand{\lv}{\mathbb{I}}
\newcommand{\pxi}{\ensuremath{\Pr\left(\xi(\bvg)\right)}}
\newcommand{\Psiv}{\mbox{\boldmath$\Psi$}}
\newcommand{\Hv}{\mbox{\boldmath$H$}}
\newcommand{\Gammav}{\mbox{\boldmath$\Gamma$}}
\newcommand{\Phiv}{\mbox{\boldmath$\Phi$}}
\newcommand{\BBv}{\mbox{\boldmath$\mathcal{B}$}}
\newcommand{\Omegav}{\mbox{\boldmath$\Omega$}}
\newcommand{\Av}{\mbox{\boldmath$A$}}
\newcommand{\BF}{\ensuremath{{\rm BF}}}
\newcommand{\ABF}{\ensuremath{{\rm ABF}}}
\newcommand{\vect}{\ensuremath{{\rm vec}}}
\newcommand{\etr}{\ensuremath{{\rm etr}}}
\newtheorem{prop}{PROPOSITION}
\newtheorem{lemma}{LEMMA}
\newtheorem{note}{NOTE}
\newtheorem{defn}{DEFINITION}
\title{{\bf Bayesian Model Selection in Complex Linear Systems, as Illustrated in Genetic Association Studies}}
\author
{Xiaoquan Wen \\
Department of Biostatistics,  University of Michigan, Ann Arbor, USA}
\begin{document}

\maketitle

\begin{abstract}
Motivated by examples from genetic association studies, this paper considers the model selection problem in a general complex linear model system and in a Bayesian framework. We discuss formulating model selection problems and incorporating context-dependent {\it a priori} information through different levels of prior specifications. We also derive analytic Bayes factors and their approximations to facilitate model selection and discuss their theoretical and computational properties. We demonstrate our Bayesian approach based on an implemented Markov Chain Monte Carlo (MCMC) algorithm in simulations and a real data application of mapping tissue-specific eQTLs. Our novel results on Bayes factors provide a general framework to perform efficient model comparisons in complex linear model systems.
\end{abstract}

\section{Introduction}
Genetic association studies aim to detect statistical associations between genetic variants (most commonly, single nucleotide polymorphisms, or SNPs) and phenotypic traits. Genetic associations are complicated in nature: multiple SNPs may simultaneously affect a single phenotype, the genetic effects of a SNP with respect to a phenotype may exhibit a large degree of heterogeneity in different environmental conditions (known as gene-environment interactions), and a single SNP may affect multiple phenotypes through gene networks.
Statistical analysis of genetic associations under these complex settings has become increasingly important because it can yield a comprehensive understanding of the roles played by genetic variants in a biological system. To illustrate, we briefly introduce two motivating examples.

{\bf Motivating Example 1: Multiple-Tissue eQTL Mapping.}
eQTLs (expression quantitative trait loci) are genetic variants associated with gene expression phenotypes and play important roles in transcriptional regulation processes. Most recently, eQTL data have been collected from multiple tissue/cell types (e.g., the NIH GTEx project). One important goal is to identify eQTLs across tissues and investigate how their effects vary in different cellular environments. Biologically, it is expected that a proportion of eQTLs are active (i.e., effect size $\ne 0$) only in certain tissues but silent (i.e., effect size $= 0$) in others, a classic case of gene-environment interaction; for tissues in which an eQTL is active, the regulatory environments of the target gene are likely similar, and the effects of the eQTL are expected to show {\em low heterogeneity}.
In addition, because a single gene is typically subject to many regulatory elements, it is highly likely that there exist multiple eQTLs for any given gene.
Finally, in the most popular experimental design of this type, multiple tissue samples are collected from the same set of individuals, and intraindividual correlations of gene expressions need to be accounted for. Under this setting, it is challenging to simultaneously identify multiple and potentially tissue-specific eQTLs.
 
{\bf Motivating Example 2: Fine-Mapping in a Genetic Association Meta-Analysis.}
Genetic association studies with limited sample sizes are underpowered to detect modest association signals. Nevertheless, genuine genetic associations typically show {\em consistent} effect sizes in many independent studies.
Meta-analysis therefore becomes critically important to aggregate sample sizes and increase power for detecting associations.
Currently, most existing meta-analytic approaches in genome-wide association (GWA) studies analyze one SNP at a time. In a meta-analytic setting, the simultaneous mapping of multiple genetic associations, especially in a predefined genomic region, remains a statistical challenge.

Although identifying non-zero genetic associations can be naturally formulated as a model-selection problem, most available approaches (\cite{Fridley2009, Wilson2010,Wu2009,Mitchell1988,Guan2011}), applicable only to single multiple linear regression models, are inadequate for addressing the situations described in our motivating examples. This is mainly because, in both cases, observed data form subgroups (viz., different tissue types in eQTL mapping and individual GWA studies in the meta-analysis). We not only require a complex model system to account for these subgroup structures (in likelihood computation), but we also require variable selections to be performed either with respect to (as in the case of tissue-specific eQTLs) or integrating among (as in meta-analysis) the intrinsic subgroup structures. Furthermore, as we have shown in both examples, there typically exists {\it a priori} information on the correlations of non-zero effects. Effectively utilizing this prior information would greatly improve the performance of model selection and make the results easy to interpret.

In this paper, we describe a general system of linear models that is capable of addressing both of the motivating examples. We consider the problem of formulating model (variable) selection through prior specification under this linear system and propose Bayesian solutions to conduct model comparison and model selection via Bayes factors. We illustrate our Bayesian approach through simulation studies and a real example of tissue-specific eQTL mapping. We want to emphasize that our results on Bayes factors, discussed in section \ref{bf.rst.sec}, are completely general and  can be readily applied to a wide range of model comparison, hypothesis testing and model selection problems.

\section{A System of Simultaneous Multivariate Linear Regressions (SSMR)}\label{likelihood.section}
We describe a very general linear model system for which many commonly used linear models become special cases. It naturally applies in the complex scenarios in genetic association studies we have discussed. Unless otherwise specified, all of the results presented in this paper apply to this most general form of the linear model system. 

\subsection{Model Description and Notation}
 
We consider a system of simultaneous multivariate linear regressions (SSMR) consisting of a set of $s$ separate multivariate linear regression equations, i.e., 
\begin{equation}
  \Yv_i = \Xv_{c,i}\Bv_{c,i} + \Xv_{g,i} \Bv_{g,i} + \Ev_i,~ \Ev_i \sim {\rm MN} \left(\bf{0} ,\Iv, \Sigv_i \right), ~~ i= 1,\dots,s,
\end{equation}
where ``MN" denotes the matrix-variate normal distribution, and each composing linear equation describes one of the $s$ non-overlapping subgroups of observed data. For subgroup $i$ with $n_i$ subjects, $\Yv_i$ is an $n_i \times r$ matrix with each row representing $r$ quantitative measurements from one subject.
We denote $\Xv_i = (\Xv_{c,i} ~ \Xv_{g,i})$ as the $n_i \times (q_i+p)$ design matrix, in which $\Xv_{g,i}\, (n_i \times p) $ represents the data matrix of $p$ explanatory variables of interest (e.g., genotypes of interrogated genetic variants), and $\Xv_{c,i}\, (n_i \times q_i)$ represents the data of $q_i$ additional variables (including the intercept) to be controlled for;
matrices $\Bv_{g,i}$ ($p \times r$) and $\Bv_{c,i}$ ($q_i \times r$) contain the regression coefficients for the explanatory and the controlled variables, respectively.
Finally, $\Ev_i$ is an $n_i \times r$ matrix of residual errors in which each row vector is assumed to be independent and identically distributed as ${\rm N}({\bf 0}, \Sigma_i)$ (i.e., $\Ev_i \sim {\rm MN} \left(0 , \Iv, \Sigv_i \right)$).
Although the same set of $r$ response variables and $p$ explanatory variables are assumed to be measured in all $s$ subgroups, we allow each composing linear model to control for a different set of covariates. Furthermore, the residual errors are assumed to be independent across subgroups. 
In addition, we denote $\YYv:= \{\Yv_1,\dots,\Yv_s\}$, $\XXv := \{\Xv_1,\dots,\Xv_s\}$ and $\boldmath{\mathcal{E}} := \{\Sigv_1 ,\dots, \Sigv_s\}$. (Throughout the paper, we refer to $\SSigv$ as ``error variances".)

The SSMR model is a generalization of a class of linear systems; some commonly used special cases include the following:
\begin{enumerate}
  \item  Multiple Linear Regression: $s=1$ and $r=1$.
  \item  Multivariate Linear Regression (MVLR): $s=1$. This is a suitable model for describing multiple-tissue eQTLs for which different tissue samples are obtained from the same set of individuals (Motivating Example 1).
  \item  Systems of Simultaneous Linear Regressions (SSLR): $r=1$. This model can be applied to fine mappings of genetic variants in a meta-analytic setting (Motivating Example 2).
\end{enumerate}
The general SSMR model is also uniquely important for many genetics/genomics applications. One such example is the meta-analysis of genetic variants with respect to multiple phenotypes.

We introduce the vectorized regression coefficients $\bvg := {\tiny \left( \begin{array}{c} \vect(\Bv_{g,1}') \\ \vdots \\  \vect(\Bv_{g,s}') \\ \end{array}\right)}$ and $\bvc := {\tiny \left( \begin{array}{c} \vect(\Bv_{c,1}') \\ \vdots \\  \vect(\Bv_{c,s}') \\ \end{array}\right)}$, which are mathematically convenient to work with. We use the notation ${\lv}(\beta_{g,i})$ to denote an indicator function of the $i$-th component of $\bvg$, such that ${\lv}(\beta_{g,i}) = 1$ if $\beta_{g,i} \ne 0$ and 0 otherwise. Furthermore, we define the following indicator vector:
\begin{equation}
  \label{skeleton}
  \xi(\bvg) := \left({\lv}(\beta_{g,1} ), {\lv}(\beta_{g,2}), \dots \right).
\end{equation}
In this paper, $\xi(\bvg)$ is our quantity of interest for model selection.

To perform Bayesian inference based on the SSMR model, we assign prior distributions for $\bvg, \bvc$, and $\Sigv$.
For $\bvg$, we assume a multivariate normal prior,
\begin{equation}
   \label{prior.wg}
  \bvg \sim {\rm N}({\bf 0}, \Wv_g).
\end{equation}	
The variance-covariance matrix $\Wv_g$ plays a central role in our framework, and we defer a detailed discussion of it to section \ref{prior.section}.
For the regression coefficients of controlled variables, we assume
\begin{equation}
    	\bvc \sim{\rm N} \left({\bf 0}~,~\Psiv_c \right),
\end{equation}
where matrix $\Psiv_c$ is assumed to be diagonal. When performing an inference, we consider the limiting condition $\Psiv_c^{-1} \to \bf {0}$ (i.e., each composing coefficient in $\bvc$ is effectively assigned an independent flat prior). Furthermore, we assume $\bvg$ and $\bvc$ are {\it a priori} independent.
Finally, we assign an independent inverse Wishart prior, with parameters $m_i$ (a positive scalar) and $\Hv_i$ (a positive-definite $r \times r$ matrix), for each composing $\Sigv_i \in \SSigv$, i.e.,
 \begin{equation}
 	\label{prior.sigv}
	\Sigv_i \sim {\rm IW}_r(\nu_i \Hv_i, m_i),
 \end{equation}
where $\nu_i = m_i - q_i -r -1$, and we require $\nu_i > 0$. If $r$ is small relative to the sample size, $\Sigv_i$ can be sufficiently learned from the data. In such cases (as in the simulations and the data application of this paper), we consider the limiting condition $\Hv_i \to 0$ and $\nu_i \to 0$. As  $r$ is large, setting $\Hv_i$ and $\nu_i$ requires context-dependent considerations, we discuss this briefly in the discussion.

\section{Prior Specification for Structured Model Selection in SSMR}\label{prior.section}
At its most basic level, a model/variable selection problem in the SSMR model can be formulated as an inference on $\xi(\bvg$) (defined in Equation(\ref{skeleton})). Throughout this paper, we refer to a candidate model as a particular configuration of $\xi(\bvg)$. In our Bayesian framework, a prior distribution on the space of candidate models, $\pxi$, is used to prioritize (or in the extreme case, enforce) a certain class of preferred models. 
For instance, the intrinsic (sub)group structure and the sparse property of preferred candidate models can be quantified by $\pxi$.
{\em Given a candidate model}, we use the multivariate normal prior (\ref{prior.wg}) to fully specify the prior distribution on $\bvg$, for which a positive semidefinite covariance matrix $\Wv_g$ is sufficient. In this presentation, we use matrix $\Wv_g$ to serve two primary purposes:
\begin{enumerate}
\item articulate the structure of the given candidate model $\xi(\bvg)$.
\item convey context-dependent {\it a priori} correlation information on non-zero elements of $\bvg$ to aid model selection.
\end{enumerate}
The first point provides convenience in mathematical representations, and the second point highlights the fact that matrix $\Wv_g$ incorporates a source of prior information that complements what is conveyed in $\pxi$.

The idea of using matrix $\Wv_g$ to represent a candidate model is similar to the use of the ``spike-and-slab" prior in Bayesian variable selection: for a regression coefficient $\beta \in \bvg$, it is convenient to represent $\Pr(\beta = 0) = 1$ by a degenerate normal (prior) distribution $\beta \sim {\rm N}(0,0)$ (i.e., a spike), and accordingly, a non-zero marginal prior variance on $\beta$ (i.e., a slab) indicates the corresponding variable is included. Thus, information about $\xi(\bvg)$ can be directly obtained from the main diagonal of a given (singular) matrix $\Wv_g$.

The off-diagonal of matrix $\Wv_g$ defines {\it context-dependent} prior correlations between (non-zero) regression coefficients.
Incorporating this information in the inference enables ``borrowing strength" across correlated components in $\bvg$, thereby improving the efficiency of model selection.
Given a specific context and a candidate model, the qualitative dependence relationships between any two coefficients in $\bvg$ are typically determined. Much recent research has been devoted to further quantifying such correlation structures (\cite{Scott-Boyer2012, Guan2011,Wen2011}. We provide a brief summary of some existing prior specification approaches in various genetic settings in Appendix A.

\subsection{Parameterization of $\Wv_g$ for Model Selection}
To better facilitate model selection, we propose to parameterize $\Wv_g = (\Gammav_g, \Lambda_g)$, where $\Gammav_g$ is a binary matrix consisting of entry-wise non-zero indicators and is identical in size and layout to $\Wv_g$; $\Lambda_g= \{w_{ij}\}$ is an indexed set of numerical values quantifying each non-zero entry in the $\Gammav_g$ matrix.
For a given candidate model, the main diagonal of $\Gammav_g$ corresponds to $\xi(\bvg)$. The off-diagonal of $\Gammav_g$ represents the qualitative prior dependence relationships between coefficients in $\bvg$ and can always be deterministically specified given its diagonal and a specific application context.
Mathematically speaking, there always exists a context-dependent injection from $\xi(\bvg)$ to $\Gammav_g$.

Given the prior probability $\pxi$ for a candidate model, we now have a principled way to specify a prior distribution on matrix $\Wv_g$, i.e.,
\begin{equation}
  \label{wg.prior}
  p(\Wv_g) =  p(\Lambda_g \mid \xi(\bvg))\cdot \pxi.
\end{equation}

\subsection{Scale-Invariant Prior Formulation}
In practice, it is often desirable that inference results be invariant to linear transformations of response variables (the $g$-prior for multiple linear regressions and the conjugate prior commonly used in the MVLR model both have this property, see also \cite{Servin2007, Wen2011}). To achieve this in the SSMR model, we scale each element in $\bvg$ by its corresponding marginal residual standard error (in the MVLR, the residual standard error for a given regression coefficient is represented as the square root of the corresponding diagonal element in its residual variance-covariance matrix).
More formally, we define a vector of scale-free standardized effects by
$\sbv_g := \SSv^{-\frac{1}{2}} \bvg$, where $\SSv$ is a diagonal matrix permuted from $\oplus_{i=1}^s \left(\Iv \otimes{\rm diag}(\Sigv_i)\right)$ to match the order of elements in $\bvg$. (Throughout this paper, we use``$\otimes$" and ``$\oplus$" to denote Kronecker product and direct sum of matrices, respectively). Under this setting, a multivariate normal prior distribution $\sbv_g \sim {\rm N}({\bf 0}, \Uv_g)$ induces a normal prior distribution on $\bvg$ with mean 0 and
\begin{equation}
  \label{unit.free.trans}
  \Wv_g = \SSv^{\frac{1}{2}} \,\Uv_g\, \SSv^{\frac{1}{2}}. 
\end{equation}
With (\ref{unit.free.trans}), we are able to handle the desired scale-invariant prior formulation as a special case of the original scale formulation.

\section{Results on Bayes Factors}\label{bf.rst.sec}
We derive Bayes factors to facilitate model comparisons and selections in the SSMR model. At the most fundamental level, Bayes factors enable us to compare the supporting evidence from observed data for a set of competing models (which are not necessarily nested). In the case that posterior model probabilities are of direct interest,  
Bayes factors can typically be utilized as computational devices in the place of marginal likelihood, which is sometimes more difficult to compute. In what follows, we discuss the Bayes factors derived from the SSMR model, assuming the multivariate normal prior (\ref{prior.wg}) is fully specified. 
Let $H_0$ denote the trivial null model, where $\bvg \equiv 0$. Then, for an alternative target model characterized by its prior variance $\Wv_g$, we formally define a null-based Bayes factor (\cite{Liang2008}) as follows: 
\begin{defn}
{\it
 Under the SSMR model, for a positive definite $\Wv_g$, the Bayes factor is defined as
 \begin{equation}
 \label{pd.wg.bf}
  \BF(\Wv_g) =  \lim_{\Psiv_c^{-1} \to 0}  \frac{P(\YYv|\XXv,\Wv_g)}{P(\YYv|\XXv, H_0)}.
  \end{equation}
}
\end{defn}  
For technical reasons, the above definition requires $\Wv_g$ to be full rank; we will extend this definition to allow for a singular $\Wv_g$ matrix later in section \ref{singular.ex}.

\subsection{Analytic Results of Bayes Factors}
We start by introducing some necessary additional notation. We use $\hbvg$ to denote the maximum likelihood estimate (MLE) of $\bvg$ and denote its variance by $\Vv_g := {\rm Var}(\hbvg)$. Under the SSMR model, both $\hbvg$ and $\Vv_g$ have closed-form expressions: $\hbvg$ depends only on observed data $\XXv$ and $\YYv$, while $\Vv_g$ depends on $\XXv$ and $\SSigv$ (their explicit functional forms can be found in Appendix B).

\subsubsection{Exact Bayes Factors with Known Error Variances}
In the general case of the SSMR model, when the error variances are considered known, rather than being assigned priors, the exact Bayes factor can be analytically expressed. We summarize this result in the following lemma:
\begin{lemma}
In the SSMR model, if $\SSigv$ is known, the Bayes factor in definition 1 can be analytically computed by
\begin{equation}
  \label{exact.bf}
  {\rm BF}(\Wv_g) = | \Iv + \Vv_g^{-1}\Wv_g|^{-\frac{1}{2}} \cdot \exp \left( \frac{1}{2} \hbvg' \Vv_g^{-1} \left[\Wv_g (\Iv +\Vv_g^{-1}\Wv_g)^{-1}\right] \Vv_g^{-1} \hbvg \right).
\end{equation}
\end{lemma}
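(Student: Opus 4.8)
The plan is to collapse the Bayes factor into a ratio of two Gaussian marginal likelihoods in a ``summary-statistic'' model, and then evaluate that ratio by a standard Gaussian-convolution calculation. The essential first step is to show that, when $\SSigv$ is known, the data enter the inference on $\bvg$ only through the MLE $\hbvg$ and its variance $\Vv_g$. Concretely, I would start from the matrix-normal likelihood of each $\Yv_i$, stack the $s$ regressions, and integrate out $\bvc$ against its normal prior. Sending $\Psiv_c^{-1} \to 0$ turns that prior into an improper flat prior, which effectively profiles the controlled variables out of the model; because $\SSigv$ is held fixed, the residual Gaussian structure is preserved, and the integrated likelihood, viewed as a function of $\bvg$, is proportional to the Gaussian density $\mathrm{N}(\hbvg \mid \bvg, \Vv_g)$. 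Equivalently, the sufficient reduction is $\hbvg \mid \bvg \sim \mathrm{N}(\bvg, \Vv_g)$, with $\hbvg$ and $\Vv_g$ the closed-form quantities of Appendix B. This is the main obstacle: one must verify that the divergent normalizing constant produced by the flat-prior limit on $\bvc$, together with the data-only normalization, is \emph{identical} in the numerator and the denominator, so that it cancels cleanly in the ratio~(\ref{pd.wg.bf}) and the limit is well defined.

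Granting this reduction, the numerator $P(\YYv \mid \XXv, \Wv_g)$ becomes the marginal of $\hbvg$ after placing the prior $\bvg \sim \mathrm{N}(0, \Wv_g)$, i.e. the Gaussian convolution $\int \mathrm{N}(\hbvg \mid \bvg, \Vv_g)\, \mathrm{N}(\bvg \mid 0, \Wv_g)\, d\bvg = \mathrm{N}(\hbvg \mid 0,\, \Vv_g + \Wv_g)$, whereas the denominator $P(\YYv \mid \XXv, H_0)$ is recovered by setting $\bvg = 0$, giving $\mathrm{N}(\hbvg \mid 0, \Vv_g)$. The $(2\pi)^{-\dim(\bvg)/2}$ factors are common to both and drop out, so the Bayes factor is the ratio of these two densities evaluated at $\hbvg$:
\begin{equation*}
  \BF(\Wv_g) = \frac{|\Vv_g|^{1/2}}{|\Vv_g + \Wv_g|^{1/2}} \exp\left( \tfrac{1}{2}\, \hbvg' \left[ \Vv_g^{-1} - (\Vv_g + \Wv_g)^{-1} \right] \hbvg \right).
\end{equation*}

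Finally I would match this to the stated form~(\ref{exact.bf}) by two algebraic simplifications. For the determinant prefactor, factor $\Vv_g + \Wv_g = \Vv_g(\Iv + \Vv_g^{-1}\Wv_g)$ to get $|\Vv_g|/|\Vv_g + \Wv_g| = |\Iv + \Vv_g^{-1}\Wv_g|^{-1}$. For the quadratic form, the same factorization gives $(\Vv_g + \Wv_g)^{-1} = (\Iv + \Vv_g^{-1}\Wv_g)^{-1}\Vv_g^{-1}$, so that $\Vv_g^{-1} - (\Vv_g + \Wv_g)^{-1} = \left[ \Iv - (\Iv + \Vv_g^{-1}\Wv_g)^{-1} \right]\Vv_g^{-1} = \Vv_g^{-1}\Wv_g(\Iv + \Vv_g^{-1}\Wv_g)^{-1}\Vv_g^{-1}$, which is exactly the kernel appearing in~(\ref{exact.bf}). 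These final manipulations are routine once the Gaussian reduction is in hand; all the substance lies in establishing $\hbvg \mid \bvg \sim \mathrm{N}(\bvg, \Vv_g)$ and the cancellation of the improper-prior normalizing constants.
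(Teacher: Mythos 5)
Your proposal is correct, and it reaches (\ref{exact.bf}) by a genuinely different organization of the calculation than the paper's own proof. The paper (Appendix B.1) performs a single joint Gaussian integration over $(\bvc,\bvg)$ against the prior ${\rm N}({\bf 0},\Psiv_c\oplus\Wv_g)$, separately for the target and null models, and only then forms the ratio and sends $\Psiv_c^{-1}\to 0$; this requires a fairly heavy block-matrix analysis (decomposing $\Omegav$ into $\Omegav_c,\Omegav_f,\Omegav_g$, tracking the blocks $\Av,\Bv,\Dv$ of the exponent and showing $\Av,\Bv\to 0$ in the limit), with the Schur-complement identity $\Vv_g^{-1}=\Omegav_g-\Omegav_f'\Omegav_c^{-1}\Omegav_f$ doing the real work. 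Your two-stage route --- integrate out $\bvc$ first to isolate the summary-statistic model $\hbvg\mid\bvg\sim{\rm N}(\bvg,\Vv_g)$, then write the Bayes factor as the ratio of the convolution ${\rm N}(\hbvg\mid{\bf 0},\Vv_g+\Wv_g)$ to ${\rm N}(\hbvg\mid{\bf 0},\Vv_g)$ --- rests on exactly the same Schur-complement fact, but packages it once, after which everything is a textbook Gaussian convolution plus the (correct) matrix algebra you give. What your route buys is conceptual transparency: it makes explicit that, with $\SSigv$ known, the data enter only through $\hbvg$, it connects to the Wakefield-style summary-statistic Bayes factor, and it never needs $\Wv_g^{-1}$, whereas the paper's intermediate expressions do. What the paper's route buys is twofold: its finite-$\Psiv_c$ intermediate formulas (\ref{int.rst1}) and (\ref{int.rst0}) are reused verbatim in the unknown-$\SSigv$ case to prove Proposition 1 by Laplace's method, and the order-of-limits issue you flag as the ``main obstacle'' is settled by construction, since the ratio is formed at finite $\Psiv_c$, where the $|\Psiv_c|^{-\frac{1}{2}}$ and $|\Omegav_c+\Psiv_c^{-1}|^{-\frac{1}{2}}$ factors cancel exactly, and the limit is taken only in the remaining convergent expression. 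That is the one place your sketch should be tightened: rather than cancelling ``divergent constants'' after the limit, observe that for finite $\Psiv_c$ the denominator is literally the integrated likelihood $L(\bvg)$ evaluated at $\bvg={\bf 0}$, so the $\bvg$-independent prefactor cancels identically in the ratio, and then pass to the limit (the completed-square precision $\Omegav_g-\Omegav_f'(\Omegav_c+\Psiv_c^{-1})^{-1}\Omegav_f$ converges to $\Vv_g^{-1}$, and dominated convergence handles the integral over $\bvg$). With that ordering made explicit, your argument is a complete and arguably cleaner proof of Lemma 1.
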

The derivation of Lemma 1 is mostly straightforward; the details are provided in Appendix B.1.
 
\begin{note} {\it The Bayes factor naturally addresses potential collinearity in predictors. In particular, the evaluation of the Bayes factor does not require the involved design matrices to be full rank (the details are explained in Appendix C). As a result, when highly correlated explanatory variables are included in the model, the Bayes factor can still be stably computed without special computational treatments.}
\end{note}
Note 1 is extremely relevant for genetic applications, where genotypes of many spatially close genetic variants are often highly correlated.  
 
\subsubsection{Approximate Bayes Factors with Unknown Error Variances}
In more realistic settings, error variances are typically unknown
and additional integrations with respect to $\SSigv$ are necessary for Bayes factor evaluations. Except for a very few special cases, the exact Bayes factor generally is analytically intractable.
Alternatively, we apply Laplace's method to pursue analytic approximations of the Bayes factor.
Laplace's method has been widely applied in computing Bayes factors in other similar settings (\cite{Kass1995, Raftery1996, Diciccio1997,Saville2009, Wen2011}). In the case of the SSMR model, applying Laplace's method yields an analytic approximation that maintains the exact functional form of (\ref{exact.bf}) -- only with the unknown $\Sigv$ replaced by an intuitive point estimate.
More specifically, $\ABF$ substitutes each $\Sigv_i \in \SSigv$ in  (\ref{exact.bf}) with the following Bayesian shrinkage estimate
\begin{equation}
 \label{bayes.Sigma.est}
   \check \Sigv_i = \frac{\nu_i}{n_i + \nu_i} \Hv_i + \frac{n_i}{n_i + \nu_i} \left[\alpha_i \hat \Sigv_i + (1-\alpha_i) \tilde \Sigv_i \right],
\end{equation}
where $\hat \Sigv_i$ and $\tilde \Sigv_i$ denote the MLEs of error variances estimated from the residuals under the target and the null models, respectively, parameters $\nu_i$ and $\Hv_i$ are defined in the inverse-Wishart prior of $\Sigv_i$, and parameter $\alpha_i \in [0,1]$ serves as a tuning parameter and has an impact on the finite-sample accuracy of the resulting Bayes factor approximations. We further denote $\av = ( \alpha_1,\dots,\alpha_s) $ and $\hSigv := \{\check \Sigv_1,\dots,\check \Sigv_s\}$.

Other relevant quantities in (\ref{exact.bf}) that are functionally related to $\SSigv$ include $\Vv_g$ and potentially $\Wv_g$ (e.g., in the scale-invariant prior formulation). We denote $\check \Vv_g$ and $\check \Wv_g$ as the corresponding plug-in estimates of $\Vv_g$ and $\Wv_g$ by $\hSigv$. 

The result of the approximate Bayes factor is summarized in the following proposition:
\begin{prop}{\it
Under the SSMR model, when $\SSigv$ is unknown, applying Laplace's method leads to the following analytic approximation of the Bayes factor
\begin{equation}
   \label{abf1.general}
   {\rm ABF}(\Wv_g, \av) := |\Iv + \check \Vv_g^{-1}\check \Wv_g |^{-\frac{1}{2}} \cdot \exp \left( \frac{1}{2} \hbvg' \check \Vv_g^{-1}\left[ \check \Wv_g (\Iv + \check \Vv_g^{-1} \check \Wv_g )^{-1} \right]\check \Vv_g^{-1} \hbvg \right).
\end{equation}
It follows that
$$  \BF(\Wv_g) = \ABF(\Wv_g, \av)\cdot\prod_{i=1}^{s}\left(1+O(n_i^{-1})\right). $$
}
\end{prop}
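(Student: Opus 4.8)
The plan is to integrate out the finite-dimensional coefficients first and then dispatch the error variances by Laplace's method, using a reformulation of the Bayes factor as a single posterior expectation. First I would integrate out $\bvg$ against the Gaussian prior (\ref{prior.wg}) and $\bvc$ in the flat-prior limit $\Psiv_c^{-1}\to\mathbf{0}$, exactly as in the derivation of Lemma 1, so that both $P(\YYv\mid\XXv,\Wv_g)$ and $P(\YYv\mid\XXv,H_0)$ become integrals over the error variances $\SSigv$ against the independent inverse-Wishart priors (\ref{prior.sigv}). By Lemma 1 the numerator integrand factors as $\BF(\Wv_g\mid\SSigv)\cdot P(\YYv\mid\XXv,H_0,\SSigv)$, where $\BF(\Wv_g\mid\SSigv)$ is the known-variance expression (\ref{exact.bf}). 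Taking the limit inside the integrals, the Bayes factor collapses to
\begin{equation*}
\BF(\Wv_g)=\mathrm{E}_{\SSigv\mid\YYv,H_0}\!\big[\BF(\Wv_g\mid\SSigv)\big],
\end{equation*}
the null-model posterior mean of the exact known-variance Bayes factor. This identity is the backbone of the argument: it converts the two-integral problem into the approximation of one smooth posterior expectation.

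Next I would apply Laplace's method. Because the errors are independent across subgroups and the priors factor, the posterior of $\SSigv$ splits into $s$ blocks $\Sigv_i$, each concentrating on the positive-definite cone at rate $n_i$; under inverse-Wishart conjugacy the block posterior mode is a convex combination of the prior scale $\Hv_i$ and a data-based residual covariance. The relevant residual covariance lies between the target MLE $\hat\Sigv_i$ and the null MLE $\tilde\Sigv_i$, since integrating $\bvg$ against its prior shrinks the fitted effect; the parameter $\alpha_i$ is meant to encode how far toward $\hat\Sigv_i$ this effective mode sits, so that the evaluation point is precisely the shrinkage estimate $\check\Sigv_i$ of (\ref{bayes.Sigma.est}). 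Plugging $\hSigv$ into the factored integrand then reproduces (\ref{exact.bf}) with $\Vv_g$ and $\Wv_g$ replaced by their plug-ins $\check\Vv_g$ and $\check\Wv_g$ (the latter also covering the scale-invariant formulation, where $\Wv_g$ itself depends on $\SSigv$), which is exactly $\ABF(\Wv_g,\av)$ in (\ref{abf1.general}).

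To obtain the error order I would retain only the leading Gaussian term of each Laplace expansion and form the ratio. The determinantal normalizing constants and the first-order correction terms of numerator and denominator cancel, because the target and null integrands differ only through the finite-dimensional effect $\bvg$; this cancellation upgrades the relative error from $O(n_i^{-1/2})$ to $O(n_i^{-1})$, and block-diagonality of the leading Hessian across subgroups turns the single correction into the product $\prod_{i=1}^{s}\big(1+O(n_i^{-1})\big)$.

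The hard part will be the error analysis, not the leading term, which follows mechanically from the posterior-expectation identity. I expect two delicate points. First, the exact posterior mode of $\Sigv_i$ under the $\bvg$-integrated model is not available in closed form, so the substance of the claim is that replacing it by the tractable $\check\Sigv_i$ --- for a suitable, and in practice tunable, $\av$ --- perturbs $\BF$ only at order $O(n_i^{-1})$; making the role of $\av$ precise and showing it does not contaminate the leading term is the crux. Second, I would need to justify the cancellation of the first-order Laplace corrections in the ratio and verify the regularity conditions for Laplace's method on the positive-definite cone (smoothness and integrability of the integrand in each $\Sigv_i$, and nondegeneracy of the block Hessians), so that the expansion and its uniform remainder bounds are valid.
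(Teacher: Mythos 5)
Your overall architecture coincides with the paper's: integrate out $\bvc$ and $\bvg$ first (as in Appendix B.1), justify exchanging the $\Psiv_c^{-1}\to 0$ limit with the $\SSigv$-integrals by bounded convergence, and write
\begin{equation*}
\BF(\Wv_g)=\frac{\int K_{H_a}\,d\Sigv_1^{-1}\cdots d\Sigv_s^{-1}}{\int K_{H_0}\,d\Sigv_1^{-1}\cdots d\Sigv_s^{-1}},\qquad K_{H_a}=\BF(\Wv_g\mid\SSigv)\cdot K_{H_0},
\end{equation*}
which is exactly your null-posterior-expectation identity. The genuine gap is the step you yourself flag as ``the crux'': how $\av$ enters, and why the claim holds for \emph{every} $\av$ with the evaluation point $\check\Sigv_i$ of (\ref{bayes.Sigma.est}). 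Your proposed mechanism --- that $\alpha_i$ encodes where an intractable ``effective posterior mode'' of the numerator integrand sits between $\tilde\Sigv_i$ and $\hat\Sigv_i$ --- is not how the argument can be closed, and as described it would also break the cancellation you invoke: if numerator and denominator were each expanded around their own (different) modes, the Gaussian constants and Hessian determinants would not cancel exactly, and the mismatch would itself have to be controlled.

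The missing idea is the residual decomposition identity (equations (\ref{residual.relation})--(\ref{relation.eqn2}) of Appendix B): for each subgroup,
\begin{equation*}
\mathrm{tr}\left[\Sigv_i^{-1}(\Yv_i-\Xv_{c,i}\tilde\Bv_i)'(\Yv_i-\Xv_{c,i}\tilde\Bv_i)\right]
=\mathrm{tr}\left[\Sigv_i^{-1}(\Yv_i-\Xv_i\hat\Bv_i)'(\Yv_i-\Xv_i\hat\Bv_i)\right]
+\hat\bv_{g,i}'\,\Vv_{g,i}^{-1}\,\hat\bv_{g,i}.
\end{equation*}
This identity lets one rewrite the exponential kernel \emph{common} to $K_{H_a}$ and $K_{H_0}$, for any fixed $\alpha_i\in[0,1]$, as $h\cdot\exp(g)$ with $g_a\equiv g_0$ containing the $\alpha_i$-mixed trace of prior scale, target residuals and null residuals, while the leftover quadratic-form pieces (and, in the numerator, the conditional Bayes factor) are absorbed into the smooth prefactors $h_a,h_0$. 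So $\av$ is a free choice of Laplace factorization, not an estimate of any posterior mode. By construction the unique maximizer of the shared $g$ is exactly $\check\Sigv_i(\alpha_i)$ in closed form, the $(2\pi)$-powers and Hessian determinants cancel identically in the ratio because both integrals use the same $g$, and what survives is $h_a(\hSigv)/h_0(\hSigv)=\ABF(\Wv_g,\av)$, with one factor $1+O(n_i^{-1})$ per subgroup from the sequential, per-$\Sigv_i$ expansions. A further minor inaccuracy: no cancellation is needed to ``upgrade'' $O(n_i^{-1/2})$ to $O(n_i^{-1})$; Laplace's method with a smooth positive prefactor already has relative error $O(n_i^{-1})$ for each integral separately, the half-order terms vanishing by symmetry of the Gaussian.
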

\begin{proof}
 See derivation in Appendix B.2.
\end{proof}
As long as $\av$ resides in an $s$-simplex, the above proposition holds. There are two notable extreme cases concerning the choice of $\av$ values:
\begin{enumerate}
 \item $\alpha_1=\cdots=\alpha_s = 1$. The resulting $\hSigv$ only relates to the MLEs estimated from the target model, i.e., $\check \Sigv_i = \frac{\nu_i}{n_i + \nu_i} \Hv_i + \frac{n_i}{n_i + \nu_i} \hat \Sigv_i$.
Under the usual asymptotic settings, where $n_i \gg p$ and $n_i \gg r$ and when the mean model is correctly specified, $\check \Sigv_i \overset{a.s.}\rightarrow \Sigv_i$. By the continuous mapping theorem, it follows that the resulting $\ABF$ almost surely converges to the true value.
 \item $\alpha_1=\cdots=\alpha_s = 0$. $\check \Sigv_i$ only relies on the MLE of $\Sigv_i$ estimated from the trivial null model, i.e., $\check \Sigv_i = \frac{\nu_i}{n_i + \nu_i} \Hv_i + \frac{n_i}{n_i + \nu_i} \tilde \Sigv_i$.
Indeed, $\hbvg$ can also be analytically expressed as a simple analytic function of the MLEs of the regression coefficients obtained from the null model. As a result, computing this particular $\ABF$ only requires fitting the trivial null model -- a scenario analogous to computing score statistics in hypothesis testing (the details are further explained in Appendix F.1).
\end{enumerate}
Notwithstanding their having the same asymptotic order of error bounds, different $\av$ values affect the accuracy of the approximations in finite-sample situations.
To examine the performance of $\ABF$s with various $\av$ values, we carry out numerical experiments with small sample sizes. In summary, we find that the resulting $\ABF$ with all $\alpha_i = 1$ tends to be anti-conservative compared with true values (most likely because $\hat \Sigv_i$ is prone to overfitting in these cases), whereas setting all $\alpha_i = 0$ understandably yields conservative approximations. Interestingly, setting $\alpha_i = 0.5$ for all subgroups gives consistently accurate numerical results in our simulation setting. Finally, we confirm that as sample sizes grow, all approximations become increasingly accurate, regardless of $\av$ values.
The details of the numerical comparisons and the results are given in Appendix E.

\subsubsection{Singular Prior Distributions}\label{singular.ex}
To extend the definition of Bayes factors for a singular $\Wv_g$, we first define
\begin{equation}
  \label{wg.limit.def}
 \Wv_g^\dagger(\lambda) = \Wv_g + \lambda \Iv, ~~\lambda > 0,
\end{equation}
where $\Wv_g$ is only required to be positive semidefinite. We then are able to extend definition 1 to include a singular $\Wv_g$ matrix:
\begin{defn} {\it
  Under the SSMR model, for a positive semidefinite $\Wv_g$, the Bayes factor is defined as
  \begin{equation}
   \label{singular.wg.bf}
  {\rm BF}(\Wv_g) = \lim_{\lambda \to 0} {\rm BF}\left(\Wv_g^\dagger(\lambda)\right).
  \end{equation}
  }
\end{defn}  
This definition is based on the following important intuition: Bayes factors are expected to vary very smoothly over a continuum of models. This is not only desirable but also critically important for selecting models consistently when using Bayes factors.
We obtain the following result regarding the existence of the limits:
\begin{prop}{\it
  For the SSMR model, the limiting Bayes factors in definition 2 are always well defined, provided that $\Wv_g$ is positive semidefinite. } 
\end{prop}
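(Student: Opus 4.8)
The plan is to reduce the existence of the limit in Definition 2 to the continuity of the closed-form expression (\ref{exact.bf}) at the positive semidefinite point $\Wv_g$. For every $\lambda>0$ the matrix $\Wv_g^\dagger(\lambda)=\Wv_g+\lambda\Iv$ is positive definite, so Lemma 1 applies and $\BF(\Wv_g^\dagger(\lambda))$ is given exactly by (\ref{exact.bf}) with $\Wv_g$ replaced by $\Wv_g^\dagger(\lambda)$. Since $\hbvg$ and $\Vv_g$ do not depend on $\lambda$, the map $\lambda\mapsto\BF(\Wv_g^\dagger(\lambda))$ is a composition of matrix products, a determinant, a matrix inverse, and the exponential, all of which are continuous wherever the inverse appearing in (\ref{exact.bf}) is well defined. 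Hence it suffices to show that the single matrix $\Iv+\Vv_g^{-1}\Wv_g$ is nonsingular, for then (\ref{exact.bf}) extends continuously to $\lambda=0$ and the one-sided limit along the ray $\Wv_g+\lambda\Iv$ exists automatically. Throughout I take $\Vv_g\succ0$, as is implicit in Lemma 1; the rank-deficient design case is handled by the reformulation of Appendix C, to which the same limiting argument applies verbatim.

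The crux of the argument is therefore a single linear-algebra fact, and this is the step I expect to be the only one that is not merely formal. Because $\Vv_g\succ0$ and $\Wv_g\succeq0$, the product $\Vv_g^{-1}\Wv_g$ is similar to the symmetric matrix $\Vv_g^{-1/2}\Wv_g\Vv_g^{-1/2}$, which is positive semidefinite; consequently every eigenvalue of $\Vv_g^{-1}\Wv_g$ is real and nonnegative. It follows that all eigenvalues of $\Iv+\Vv_g^{-1}\Wv_g$ are at least $1$, so this matrix is invertible and in fact $|\Iv+\Vv_g^{-1}\Wv_g|\ge1$. The positive semidefiniteness of $\Wv_g$ is exactly what prevents $\Iv+\Vv_g^{-1}\Wv_g$ from degenerating in the limit; without it the determinant factor could vanish and the inverse inside the exponent could blow up, which is precisely the failure mode that forced Definition 1 to require a full-rank $\Wv_g$.

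With invertibility in hand, the remaining steps are routine continuity estimates. The determinant factor converges, $|\Iv+\Vv_g^{-1}\Wv_g^\dagger(\lambda)|^{-1/2}\to|\Iv+\Vv_g^{-1}\Wv_g|^{-1/2}\in(0,1]$; the inverse $(\Iv+\Vv_g^{-1}\Wv_g^\dagger(\lambda))^{-1}$ converges to $(\Iv+\Vv_g^{-1}\Wv_g)^{-1}$ by continuity of matrix inversion at an invertible matrix; and the bracketed quantity $\Wv_g^\dagger(\lambda)(\Iv+\Vv_g^{-1}\Wv_g^\dagger(\lambda))^{-1}$, sandwiched between $\Vv_g^{-1}$ and $\hbvg$, produces a quadratic form that converges to a finite real number. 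Therefore $\lim_{\lambda\to0^+}\BF(\Wv_g^\dagger(\lambda))$ exists, is finite and strictly positive, and equals (\ref{exact.bf}) evaluated directly at $\Wv_g$; this identification also confirms the smooth-variation intuition motivating Definition 2. Finally, I would remark that the identical argument covers the approximate Bayes factor, since $\ABF(\Wv_g,\av)$ in (\ref{abf1.general}) shares the functional form of (\ref{exact.bf}) with $\check\Vv_g$ and $\check\Wv_g$ in place of $\Vv_g$ and $\Wv_g$, and $\check\Vv_g\succ0$, $\check\Wv_g\succeq0$ hold for the same reasons, so its limit over singular $\Wv_g$ is likewise well defined.
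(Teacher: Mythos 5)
Your eigenvalue argument (similarity of $\Vv_g^{-1}\Wv_g$ to $\Vv_g^{-1/2}\Wv_g\Vv_g^{-1/2}$, hence all eigenvalues of $\Iv+\Vv_g^{-1}\Wv_g$ at least $1$) is correct, and it correctly establishes the proposition in the case where $\SSigv$ is \emph{known}: there the Bayes factor is the closed form of Lemma 1, no inversion of $\Wv_g$ appears, and continuity in $\lambda$ gives the limit. This matches the part of the paper's proof that it dismisses as trivial (and your invertibility verification is a nice justification of the paper's bare assertion that $\Iv+\Vv_g^{-1}\Wv_g$ is always full rank).

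However, there is a genuine gap: you never treat the case where $\SSigv$ is unknown, which is the substantive content of the proposition. In that case the Bayes factor of Definition 1 is not the formula of Lemma 1; it is a ratio of integrals over the error precisions, $\BF(\Wv_g^\dagger(\lambda)) = \int K_{H_a}(\Wv_g^\dagger(\lambda))\,d\Sigv_1^{-1}\cdots d\Sigv_s^{-1} \big/ \int K_{H_0}\,d\Sigv_1^{-1}\cdots d\Sigv_s^{-1}$, where the integrand $K_{H_a}$ depends on $\SSigv$ through $\Vv_g$ (and possibly through $\Wv_g$ itself under the scale-invariant formulation). Your continuity argument shows only that the integrand converges \emph{pointwise} in $\SSigv$ as $\lambda\to 0$; that alone does not imply the integrals converge. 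The paper closes this by a domination step: since $|\Iv+\Vv_g^{-1}\Wv_g^\dagger(\lambda)|^{-1/2}\le 1$ and the quadratic form in the exponent is bounded above by $\hbvg'\Vv_g^{-1}\hbvg$, which by the residual identity (\ref{relation.eqn}) recombines with the trace term of $K_{H_a}$ into the null-model integrand, one gets a $\lambda$-free integrable upper bound for $K_{H_a}(\Wv_g^\dagger(\lambda))$, and the bounded convergence theorem then licenses exchanging $\lim_{\lambda\to 0}$ with the integration. Your closing remark about $\ABF$ does not substitute for this: $\ABF$ is an approximation, not the object in Definition 2, and in fact for singular $\Wv_g$ the paper's Laplace construction changes nontrivially (the plug-in $\check\Sigv_i$ must use the MLE restricted by the linear constraints that $\Wv_g$ imposes), so the "identical argument" claim there is also not quite right.
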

\begin{proof}
 See Appendix D.
\end{proof}
Proposition 2 directly extends the results of Lemma 1 and Proposition 1 to allow for a singular $\Wv_g$ matrix. Moreover, when approximating Bayes factors using Laplace's method, the functional form of the result remains the same; however, we now compute the MLE of the unknown $\Sigv_i$ for the target model, subject to the linear restrictions imposed by the singular $\Wv_g$ matrix. The details are explained in Appendix D.

\subsection{Connections to Frequentist Test Statistics and the BIC}

Previous studies by \cite{Wakefield2009, Johnson2005, Johnson2008, Wen2011} have shown in certain linear model systems (all being regarded as special cases of the SSMR model) that Bayes factors are linked to commonly used frequentist test statistics. We also identify approximate Bayes factors for the SSMR model as being connected to the multivariate Wald statistic and Rao's score statistic, depending on the choice of $\av$ value. The main consequence of this connection is that under specific prior specifications of $\Wv_g$, Bayes factors and the corresponding test statistics yield the same ranking for a set of models.

Bayes factors are also naturally linked to the Bayesian Information Criterion (BIC, \cite{Schwarz1978}). Under the SSMR model, we show (in Appendix F) that the BIC can be derived as a very rough (i.e., with error bound $O(1)\,$ in log scale) approximation to both the exact and the approximate Bayes factors for most $\Wv_g$ matrices. Because the BIC is known to be asymptotically consistent as a model selection criterion, based on this connection, we conclude that our Bayes factors also enjoy this property.

A detailed explanation of both connections is given in Appendix F.

%

\subsection{Bayes Factors of Candidate Models}

Based on the results of $\BF(\Wv_g)$ and Equation (\ref{wg.prior}), we can compute the Bayes factor of a given candidate model, $\xi(\bvg)$, by
  \begin{equation}
  \label{bf.skeleton}
  \BF\left(\xi(\bvg)\right) =  \int p\left(\Lambda_g \mid \xi(\bvg)\right)\, \BF(\Wv_g)\, d\,\Lambda_g,
\end{equation}
which essentially integrates out the effect sizes of non-zero regression coefficients.
In many genetic applications, it is feasible and effective to model $p(\Lambda_g \mid \xi(\bvg))$ by a finite discrete distribution (\cite{Servin2007, Stephens2009, Wen2011}). In these cases, the integration in (\ref{bf.skeleton}) is replaced by a summation, and the computation is efficient.

\section {Bayesian Model Selection Procedure and the MCMC Algorithm}\label{mcmc.section}

Based on the results discussed in the previous sections, we are now ready to describe the full Bayesian model selection procedure based on the SSMR model. Assuming the goal of inference is $\xi(\bvg)$, the following prior information is required to be specified in a {\em context-specific} manner:
\begin{enumerate}
  \item prior distribution in the space of candidate models, $\pxi$.
  \item injection from $\xi(\bvg)$ to $\Gammav_g$, i.e., specification of prior qualitative dependence/independence structures.
  \item probability distribution $p(\Lambda_g \mid \xi(\bvg))$, i.e., quantification of prior correlation and marginal variance specified in $\Gammav_g$.
\end{enumerate}
Then, based on Equation (\ref{bf.skeleton}) and relevant discussions on Bayes factor computations, it is straightforward to perform full Bayesian model selection under the general SSMR model. If the number of the candidate models, $2^{rps}$ in total, is computationally manageable, we can enumerate all possible models and evaluate their posterior probabilities directly. However, in most practical settings, the candidate model space is  enormous, we then need the MCMC algorithm to efficiently traverse the model space. 

For the sake of simplicity but without loss of generality, we give a detailed description of a particular version of this algorithm for the commonly used MVLR model in Appendix H. Aided by a novel proposal distribution proposed by \cite{Guan2011}, we observe that the implemented Markov chain achieves fast mixing and generates accurate results even in very high-dimensional settings. The performance of the algorithm is demonstrated through simulations and real data applications in sections \ref{sim.section} and \ref{app.section}.

\section{Simulation Studies}\label{sim.section}

We perform simulation studies to examine and demonstrate the performance of the proposed Bayesian methods in a variety of settings. In these simulations, we focus on the scenario of mapping eQTLs across a handful of tissue types using a common set of individuals, which is best described by an MVLR model with large $p$ (number of candidate genetic variants), small $n$ (sample size), and small $r$ (number of tissue types) values. Moreover, we allow each covariate (SNP) to have different (zero or non-zero) effects in $r$ subgroups (tissues); however, within a covariate, we simulate a scenario in which non-zero effects across subgroups are highly correlated.
 
\subsection{Simulation Settings}

We create two simulation settings that differ in the generation of covariates. In the first setting, we simulate $p=250$ {\em independent} covariates for $n=100$ unrelated individuals. The causal SNPs (i.e., the covariates that are associated with the phenotype in at least one of the $r$ subgroups) are independently assigned by a ${\rm Bernoulli}(0.03)$ distribution.
In the second setting, we focus on correlated covariate data. More specifically, we take real SNP genotype data from 100 Caucasian samples of the 1000 Genomes project. We select 105 genomic regions across chromosome 22 that average 30 kb in size. The two consecutive regions are approximately 300 kb apart, and within each region, we select 15 SNPs whose minor allele frequencies are greater than 5\%. Between and within these genomic regions, the genotypes present various degrees of spatial correlations (also known as linkage disequilibrium, or LD). In this setting, the regions harboring causal SNPs are assigned by a ${\rm Bernoulli}(0.03)$ distribution, and we randomly assign a single causal SNP within the selected region.

Given the covariate data, we simulate quantitative (gene expression) phenotype data in $r=3$ subgroups (tissue types) for each individual using the following scheme. For each SNP, we represent its binary association states by an $r$-vector (e.g., $\gav = (100)$ indicates a causal and tissue-specific eQTL for which association only presents in the first tissue type), and collectively, $\{\gav_i: i=1,...,p\}$ represents the true $\xi(\bvg)$. We randomly assign each causal SNP a non-zero configuration according to a discrete distribution. More specifically, among seven possible non-zero configurations, $\gav = (111)$ is assigned a probability of 0.50, and the others are assumed equally likely (i.e., with probability $1/12$ each), conditioning on $\gav \ne (000)$. This distribution is motivated by the observation from the real multiple-tissue eQTL data, where most identified eQTLs are found to have consistent effects in all tissues. For each simulated $\gav \ne (000)$, we first generate a mean effect from $\bar \beta \sim {\rm N}(0,1)$; then, non-zero genetic effects are subsequently drawn from $\beta \sim {\rm N}(\bar \beta,  \frac{{\bar \beta}^2}{100})$. With this procedure, the non-zero $\beta$s for a causal SNP across tissues are highly correlated, albeit with some non-negligible heterogeneity. Finally, the residual errors for each individual are independently simulated from a multivariate normal distribution, $\ev \sim {\rm N}(0, \Sigma)$, with $\Sigma = {\tiny \left( \begin{array}{ccc} 1.00 & 0.24 & 1.20 \\  0.24 & 1.44 & 1.08 \\ 1.20 & 1.08 & 2.25 \end{array}\right) }$ prefixed. We generate 200 and 500 data sets for simulated independent and real correlated genotypes, respectively.

\subsection{Bayesian Model Selection} \label{bayes.mvlr.sec}

We perform inference on the binary indicator vector $\xi(\bvg)$. We assume that genetic effects are {\it a priori} independent across SNPs but correlated among tissues within a single covariate if they are non-zero. This prior relationship is precisely formulated by an injection: $\Gammav_g = \oplus_{i=1}^p [\gav_i \otimes \gav_i']$, and the factorization of prior probability, $\pxi = \prod_{i=1}^p \Pr(\gamma_i)$.

In all cases, we assume the default prior probability $\Pr \left (\gav = (000)\right) = 0.99$ for each covariate, which encourages an overall sparse $\xi(\bvg)$. By default, all possible non-zero configurations for $\gav$ are assigned with equal prior probability, $0.01 \times \frac{1}{2^r-1}$.

To specify the distribution $\Lambda_g \mid \xi(\bvg)$, we follow \cite{Wen2011, Flutre2012} and model the joint prior distribution of a pair of non-zero effects within a covariate by a multivariate normal ${\tiny \left(\begin{array}{c} \beta_1 \\ \beta_2 \\ \end{array}\right) \sim {\rm N} \left[ 0~,~  \left( \begin{array}{cc} \omega^2 + \phi^2 & \omega^2 \\ \omega^2  &  \omega^2 + \phi^2  \end{array}\right)  \right]}$, where parameter $\phi$ describes the prior heterogeneity of the effects, and parameter $\omega$ characterizes the magnitude of the average prior effect, and the prior correlation between the pair can be computed by $\omega^2/(\omega^2+\phi^2)$ (details explained in Appendix A.3). 
Furthermore, instead of fixing a single $(\phi, \omega)$ value for all covariates, we assume that $(\phi_{i}, \omega_{i})$ for covariate $i$ is independently and uniformly drawn from the following set \newline
$~~~~~~~~~L := \{ (\phi^{(l)}, \omega^{(l)}): (0.05,0.20), (0.10,0.40), (0.20, 0.80), (0.40,1.60)\},$ \newline
where the various levels of $\omega$ values cover a range of potentially small, modest, and large average effects and the relatively small $\phi$ value quantifies our prior belief of low heterogeneity across non-zero effects. It is worth emphasizing that even with a single grid value, the prior would allow for a range of actual effect sizes, and multiple grid points (which form a mixture normal distribution) are helpful for describing a longer-tailed distribution of effect size. It should also be noted that all the priors we use in the inference are different from the true generative distributions used in the simulations.

For likelihood calculation of a given $\xi(\bvg)$, we compute a Bayes factor (\ref{bf.skeleton}) in which $\BF(\Wv_g)$ is approximated by $\ABF(\Wv_g, \alpha=0.5)$. We use the MCMC algorithm described in section \ref{mcmc.section} to conduct posterior inference.

For simulated independent genotype data, we use the posterior inclusion probability of each SNP configuration to assess its relative importance.
In the case of correlated covariate data, it might not be plausible to identify the true association based on observed data (e.g., in a scenario in which multiple covariates are perfectly correlated). Therefore, we focus on assessing the importance of preselected genomic regions and compute the posterior probability that a given region harbors a genetic variant with particular configurations. These quantities are computed by combining SNP-level posterior inclusion probabilities and posterior model probabilities using the inclusion-exclusion principle.

\subsection{Methods for Comparison}
We compare our Bayesian model selection method (BMS) with two other methods: single variable analysis, which examines one covariate at a time while accounting for the subgroup structure (\cite{Wen2011, Flutre2012}), and the regularized regression approach LASSO (\cite{Tibishirani1996}).

The single variable procedure can be viewed as a special case of the general MVLR model with $p=1$. For each SNP, we compute the single-SNP posterior probability of each configuration based on the corresponding $\ABF$ values and use it to assess the importance of each SNP configuration. For the real genotype data, we analyze one region at a time and further compute a regional posterior probability based on single-SNP Bayes factors, assuming at most one causal SNP in a region, a method described in (\cite{Servin2007, Flutre2012}).

We center the phenotype data and apply the LASSO procedure to estimate $\bvg$ by
\begin{equation}
  \label{lasso.eqn}
  \arg \min_{\bvg} \bigg( \vect(\Yv) - (\Iv \otimes \Xv) \bvg \bigg)' \bigg( \vect(\Yv) - (\Iv \otimes \Xv) \bvg \bigg)+ \lambda\sum_j |\beta_{g,j}|,
\end{equation}
where $\Iv$ is the $r \times r$ identity matrix and $\lambda$ is the tuning shrinkage parameter. If $\lambda$ is sufficiently large, LASSO produces sparse estimates of $\bvg$; whereas, if $\lambda$ is set to 0, the solution becomes the usual least squares estimate/MLE for the MVLR model.
Given a particular $\lambda$ value, for the simulated independent genotype data, we identify the true and false positives of non-zero $\bvg$ estimates; whereas, for the real genotype data, following \cite{Guan2011}, we further denote that a region is positively identified if any SNP within that region is selected by LASSO. 
We then record the full solution paths from LASSO for a range of $\lambda$ values using the lars package (version 1.1) implemented in R.

\subsection {Simulation Results}

In both simulation settings, we represent the results in Figure \ref{sim1.figure} by plotting curves of the trade-off between true and false positives from all three experimental methods. Each point on the curve is obtained by accumulating true and false positives across independent simulated data sets using a common threshold (either of the posterior inclusion probability or the shrinkage tuning parameter) within a method.
In both simulation settings, the Bayesian model selection method (BMS) always yields as many or more true positives than the other comparable methods for any given false-positive value.

Many previous publications have reported that multivariate methods are superior to single-variable analysis in selecting candidate variables in multiple linear regression models. We observe that a similar pattern also holds for multivariate linear regressions in our simulation settings. \cite{Guan2011} provide some very intuitive explanations for the superiority of multivariate methods vs. single-variable methods, even when covariates are all independent. Their arguments also naturally apply in our context. Although this result is largely expected, it serves as a reassuring sanity check that our implementation of the MCMC algorithm is fast mixing in this nontrivial setting (one could expect that a poor-mixing Markov chain would yield results inferior to those obtained from a single-variable analysis).

We conduct additional simulations to investigate the performance difference between BMS and LASSO.
First, we observe that the accuracy of LASSO is affected by correlated error structures characterized by $\Sigv$, which is not accounted for in (\ref{lasso.eqn}). Similar observations also have been made by \cite{Rothman2010}.
Second and more importantly, BMS utilizes additional correlation information on effect sizes within a single covariate through priors, whereas LASSO does not. We provide 
the details of these additional simulations and their results in Appendix I.

Finally, we notice that BMS performs in a stable manner even when covariate data are (highly) correlated, while LASSO greatly underperforms in such a setting.
\begin{figure}[h!t]
\begin{center}
\includegraphics[totalheight=0.42\textheight]{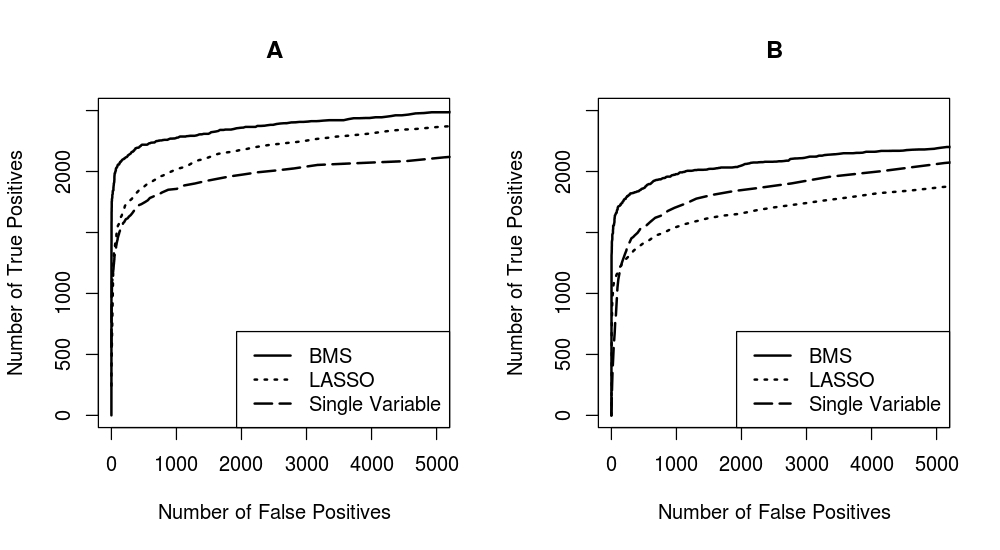}
\caption{\label{sim1.figure}Plots of the trade-offs between true positives and false positives for all three compared methods in two simulation settings. Panel A is based on simulated independent covariate data, and Panel B shows the results for correlated covariate data taken from real genotypes. In both cases, the proposed Bayesian model selection method (BMS) achieves superior performance. LASSO seems to severely underperform when covariates are correlated. }
 \end{center}
\end{figure}	
 
\section{Real Data Application}\label{app.section}

We apply the Bayesian model selection method to map eQTLs across multiple tissues on a real data set originally published by \cite{Dimas2009}. In this experiment, the investigators genotyped 75 unrelated western European individuals. Expression levels from this set of individuals were measured genome-wide in primary fibroblasts, Epstein-Barr virus-immortalized B cells (LCLs), and T cells. The expression data went through quality control and normalization steps by the original authors, and we further select a subset of 5,011 genes that are highly expressed in all 3 cell types and perform additional quantile-normalizations for each gene in each cell type. For demonstration purposes, we map eQTLs for each gene separately and narrow the search for eQTLs in the {\it cis}-region (i.e., the coding region and its close neighborhood) of each gene (note, this is also the strategy adopted in the original publication).

The setting of this data set is similar to that of our simulations. We use the MVLR model described in section \ref{bayes.mvlr.sec} to jointly infer the association states of all {\rm cis}-SNPs in three cell types for each selected gene. More specifically, we assume
the following independent priors for each SNP: $\Pr \left(\gav = (000)\right) = 0.99, ~\Pr\left(\gav = (111)\right) = 0.01 \times \frac{1}{2}$, and the rest of the six possible tissue-specific configurations are assigned probability mass $(0.01 \times \frac{1}{2} \times \frac{1}{6})$ each. This prior setup reflects our prior beliefs that the vast majority of {\rm cis}-SNPs are not eQTLs and that among eQTLs, most are likely to behave in a tissue-consistent manner. Finally, we use the same prior distribution of $p\left(\Lambda_g \mid \xi(\bvg)\right)$ described in section \ref{bayes.mvlr.sec}.

\noindent {\bf Remark 1}. It is important to note that genome-wide expression-genotype data are typically informative about the distributions of configurations of $\gav$ and effect-size grids in $L$. In other words, those distributional parameters can be effectively estimated by pooling information across all genes using a hierarchical model approach (\cite{Veyrieras2008,  Flutre2012}). In fact, the hyperparameters we select here are closely related to the estimations from fitting such a hierarchical model; however, these details are not our focus in this paper.

We apply the MCMC algorithm described in section \ref{bayes.mvlr.sec} to the set of 5,011 selected genes. We identify 510 ``eQTL genes" whose inferred best posterior model contains at least one candidate {\it cis}-SNP. In total, 539 eQTLs are identified from this set of the best posterior models, and 382 are inferred as tissue-consistent. 
Using the posterior maximum probability models, we are also able to confidently identify 28 genes with multiple {\it cis}-eQTLs accounting for linkage disequilibrium (LD), suggesting the involvement of multiple regulatory elements in transcriptional regulation processes.

One of the unique advantages of our Bayesian method is its ability to perform fine mapping on interesting genomic regions harboring true causal eQTLs. We demonstrate this feature through the analysis of gene C21orf57 (HGNC symbol YBEY, Ensemble ID ENSG00000182362). From a total of 236 {\it cis}-SNPs, our Bayesian analysis identifies three genomic regions centered around SNPs rs12329865 and rs2839265 and a SNP pair in perfect LD (rs2839156, rs2075906). The best posterior models consist of one SNP from each region, and the three regions have marginal posterior inclusion probabilities of 0.66, 0.38, and 0.89, respectively. More interestingly, our results suggest that the three distinct eQTL regions have completely different tissue activity configurations. We summarize these results in Table \ref{YBEY.tbl}. We further examine the effect sizes of the identified signals in each cell type separately, and the results (shown in appendix J) are strongly consistent with the conclusions of our tissue specificity inference.

As a comparison, we also applied the remMap method (\cite{Peng2009}, R implementation version 0.10) to the genotype-expression data of the gene C21orf57. The remMap method implements a penalized multivariate regression algorithm which assumes the same MVLR model. 
There are two tuning parameters required by the remMap method: one controls the sparsity of $\xi(\bv)$ and the other controls the sparsity of the residual error variance matrix. These two parameters are selected using a BIC procedure implemented in the R package.
In the end, remMap does not select any eQTLs. Given the strength of the signals identified by the Bayesian procedure and the results from the single SNP analysis, this is a little surprising. Nevertheless, we noted in a similar context of mapping eQTL for mutiple genes, \cite{Scott-Boyer2012} also observed this overly conservative behavior of the remMap method. We suspect that the non-trivial LD patterns presented in the SNP data might be one of the contributing factors here. As \cite{Peng2009} noted, complex correlation structures in predictors lead to the remMap procedure selecting very small models. In addition, like the LASSO procedure, the remMap method does not utilize the correlation information on eQTL effect sizes across tissues.

\begin{center}
\begin{table}[ht]
\begin{tabular} { c | c c c }
SNP & Position & Configuration & Posterior inclusion prob. \\
\hline
\hline
rs12329865 & chr 21:47583506 & LCL only & 0.662 \\
\hline
rs2075906   & chr 21:47625544  & consistent & 0.447\\
rs2839156   & chr 21:47641196  & consistent &  0.444\\
\hline
rs2839265   & chr 21:47867318   & Fibroblast only & 0.378 \\
\hline 
\end{tabular}
\vskip 0.8cm
\caption{\label{YBEY.tbl}Potential eQTLs identified by the Bayesian model selection procedure using only genotyped SNPs. Genotypes of SNPs rs2075906 and rs2839156 are highly correlated. The two models [rs12329865,rs2075906,rs2839265] and [rs12329865,rs2839156,rs2839265] have the highest posterior model probabilities (0.200 and 0.204, respectively) .}
\end{table}
\end{center}

To refine the identified genomic regions and rule out potential spurious associations identified with low-density SNPs, we perform genotype imputation to obtain additional genotypes of untyped SNPs using the 1000 Genome European panel and software package IMPUTE v2 (\cite{Howie2009}).  In the end, we accumulate genotypes from 4797 SNPs, roughly a 20-fold increase, for the same {\it cis}-region. We rerun the MCMC algorithm on the imputed data set and plot the marginal posterior inclusion probabilities of top-ranked SNPs according to their genomic positions and inferred configurations in Figure \ref{fine_mapping.figure}. 
The plot clearly indicates three adjacent however distinct genomic regions with a much improved resolution. We note that although the individual SNP inclusion probabilities decrease significantly from the previous analysis, the inclusion probabilities of the three regions all increase in some degree: the probability of the LCL only eQTL region increases from 0.66 to 0.68, the probability of the consistent eQTL region increases from 0.89 to 0.95 and the the probability of the Fibroblast only eQTL region increases from 0.38 to 0.61. Figure  \ref{fine_mapping.figure} also shows  SNP genotypes are highly correlated within each region,  and it is impossible to distinguish the true causal variants based on association analysis. Therefore, it seems only logical to report interesting regions rather than individual variants in such settings. 

\begin{figure}[h!t]
\begin{center}
\includegraphics[totalheight=0.60\textheight]{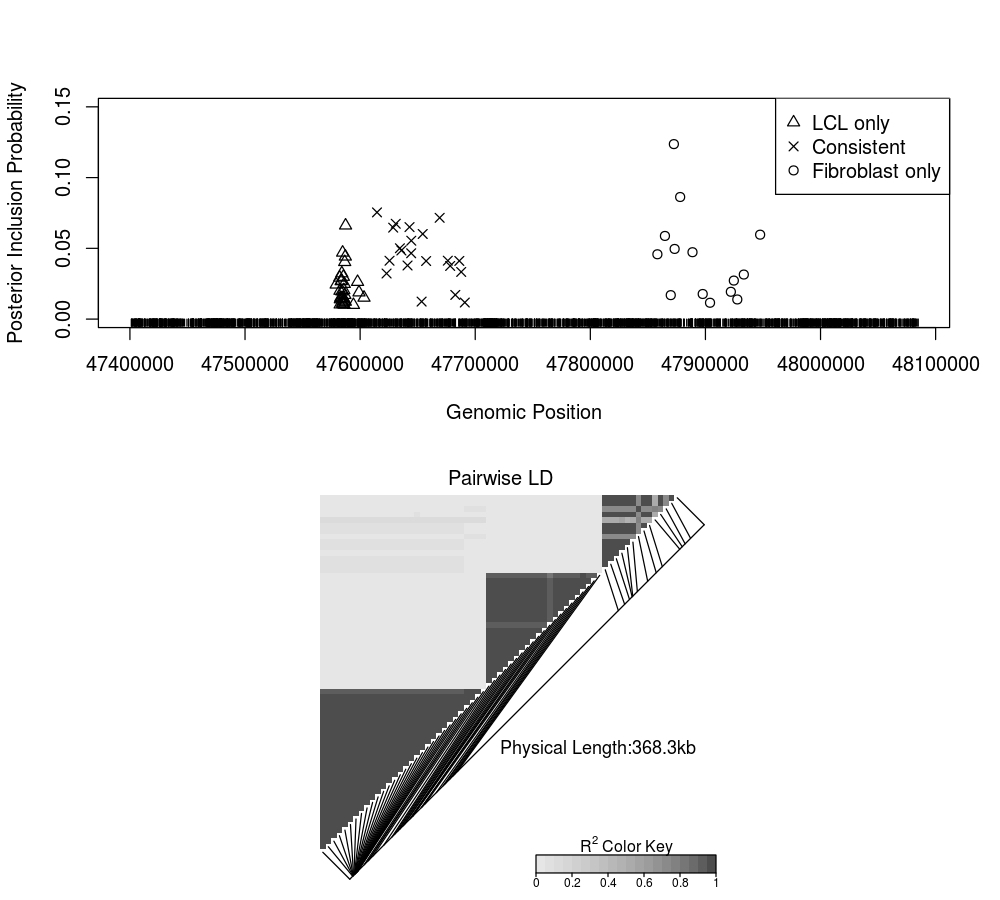}
\caption{\label{fine_mapping.figure} eQTL fine-mapping for gene C21orf57 with a dense SNP set.  The top panel plots SNPs with marginal posterior inclusion probabilities $\ge 0.01$. The different symbols indicate the different activity configurations of potential eQTLs. The ticks on the X-axis label the positions of interrogated SNPs (genotyped and imputed) in this region. Three distinct genomic regions that harbor three different eQTLs with different tissue configurations can be clearly identified from the plot. The inferred high posterior probability models typically contain one SNP from each of the three regions. 
The bottom panel displays the correlations, measured by $r^2$, between the SNPs plotted in the top panel (produced by R package LDheatmap). It should be clear that genotype correlations within each identified genomic region are quite high, and between the regions, the SNPs are much less correlated.}
\end{center}
\end{figure}

\section{Discussion}

The general statistical problem we have considered in this paper is related to the problem of structured variable selection. Our Bayesian approach provides a general framework to specify both group structures (through $\Pr\left(\xi(\bvg)\right)$) and prior correlations on non-zero effects (through $p(\Lambda_g \mid \xi(\bvg))$) in a hierarchical fashion. Compared with regularization-based model selection methods such as group LASSO (\cite{Yuan2006}) and fused LASSO (\cite{Tibshirani2005}), our method is more flexible and conceptually easier to apply. For example, in the multiple-tissue eQTL mapping example, the association patterns within a SNP across multiple tissues are rather complex; neither group LASSO (which encourages the whole group to be selected) nor sparse group LASSO (which encourages only a few members in the group to be selected) is suitable in this context. 

One of our main contributions in this paper is the results involving Bayes factors. Although we have focused mostly on model selection, our novel results can be directly applied to hypothesis-testing settings (e.g., in gene-based genetic association testing). It should be noted that although we have described our results exclusively assuming quantitative Gaussian response variables, our results can be naturally extended to the generalized linear models. We give a brief argument for this extension in appendix G.

Our simulations and data application both focus on the problem of mapping eQTLs across multiple tissues. We note that although many sophisticated statistical methods have been developed for mapping multiple ({\em cis} and {\em trans}) eQTLs (\cite{Scott-Boyer2012, Xu2009}), almost none of them considers the mapping problem in a multiple-tissue context. As shown by \cite{Flutre2012} and \cite{Ding2011}, naively applying single tissue mapping method one tissue at a time not only lacks of power in detecting tissue-consistent eQTLs but also can be dangerous in inferring tissue-specific eQTLs. Our statistical framework naturally fills this gap.  The SSLR model can be naturally applied to the fine-mapping problem in genetic association meta-analysis. Furthermore,  a prior requiring genuine association signals to display low within-group heterogeneity seems most appropriate in this context.

In our examples, we deal with relatively small $r$ value by using a non-informative inverse-Wishart prior. It should be noted that our Bayes factor results can be applied in the situation where $r$ is also high dimensional. In principle, a strong and informative prior on $\Sigv$ is sufficient (see Equation (\ref{bayes.Sigma.est})). However in practice, constructing a reasonable and strongly informative prior for a covariance matrix in high-dimensional is challenging and the context of the applications should always be carefully considered. A useful statistical technique in specifying the inverse-Wishart prior in high dimensional settings is  to utilize its connection to Gaussian graphical models (\cite{Dawid1993, Carvalho2009}) which can be extremely helpful to systematically describe the complex relationships among a large number of variables.

Finally, although we have demonstrated our approach exclusively in the genetic/genomic context, the statistical approaches presented in this paper are general enough to apply to model selection problems in other contexts, such as graphical model inference and Bayesian causal inference, to name a few examples.

\section{Software Distribution}
Software package implementing the computational methods described in this paper is available and actively maintained on the website \url{https://github.com/xqwen/sbams/}. 

\section{Acknowledgment}
We thank Jeremy Taylor, Peter Song, Ji Zhu, Bin Nan, Matthew Stephens, Timothee Flutre and Xiang Zhou, the associate editor and two anonymous referees for valuable comments. This work is supported by NIH grant HG007022 (PI G. Abecasis).

\appendix	

\section{Prior Specification in Genetic Applications}\label{genet.prior.appx}

In this section, we summarize and discuss some of the existing results that can be utilized for the prior specification in the SSMR model in various genetic applications.

\subsection{Prior Decomposition by Genetic Variants}
 
\cite{Guan2011} argue that regression coefficients of genetic effects reflect the ``causal" effects on the phenotype of interest and there is no obvious reason to suspect these causal effects among different variants are correlated spatially. (Note, it is important to distinguish the correlations among the observed genotypes and the independence of the underlying genetic effects.) The similar type of the independent prior has also been widely used in the polygenic models. As a consequence of this reasoning, it is sensible to decompose the $\Wv_g$ matrix into a block diagonal structure, i.e., $\Wv_g = \Phiv_1 \oplus \cdots + \Phiv_p$, where each block matrix $\Phiv_i$ corresponds to a single SNP. Also, the prior distribution $\Pr\left(\xi(\bvg)\right)$ can be factored into the product of the prior probability of each SNP.  

The simple i.i.d priors on SNPs provide a useful starting point for many applications in genetics. More recently,  many authors (\cite{Veyrieras2008, Stingo2011}) have proposed to integrate SNP-level genomic annotation information into prior specifications. In the simplest case,  a logit function is used to connect the genomic feature of a SNP and its marginal prior inclusion probability,  and a ``feature coefficient" is parametrized to quantify the impact of the genomic feature on the genetic association. The feature coefficient in this context is typically unknown and often of great interest for inference. As a consequence, the priors on different SNPs are no longer i.i.d. This approach  not only is useful in integrating additional information to identify the {\em causal} genetic variant, but also provides an elegant parametric framework to perform feature enrichment analysis, i.e., the posterior inference results of the feature coefficients summarize all necessary statistical evidence of the enrichment of association signals in the relevant annotation categories.

\subsection{Priors for Multiple Quantitative Traits Associations}

The interplays between genetic variants and multiple phenotypes are complicated: not only genetic variants can directly affect multiple phenotypes, but also there are  interactions between phenotypes through gene networks. As a result, genetic variants and phenotypes can be interacted in an indirect way (through some intermediate phenotypes).

Most recently, \cite{Stephens2012} proposes a directed acyclic graph (DAG) approach to address the structured phenotype relationships. Their approach first classifies phenotypes into three groups of directly affected, indirectly affected and unaffected with respect to a target genetic variant. Conditioning on the classification, an MVLR model is employed to model the genetic association between the genetic variant and the directly affected phenotypes. 
Because the true classification of the phenotypes is typically unknown, they use Bayesian model averaging technique to account for this latent structure.

Other approaches (\cite{Scott-Boyer2012, Stingo2011}) avoid directly modeling the relationship among multiple phenotypes, instead they utilize prior biological pathway and information of gene networks to prioritize the potential associations of a target variant with respect to a group of phenotypes. 

\subsection{Priors for Heterogeneous Genetic Effects in Subgroups}\label{het.genet.prior.appx}
  
When considering the genetic effects between a genetic variant and a phenotype in various subgroups (formed either by environmental conditions, e.g. in G$\times$E interactions, or by sampling structures, e.g. in meta-analysis), the key is to account for the heterogeneity of genetic effects. \cite{Wen2011} have recently proposed a flexible Bayesian prior to model heterogeneous genetic effects across multiple subgroups. For a genetic variant, this prior assumes that its genetic effects with respect to a common phenotype in $s$ subgroup, if non-zero, are described by 
\begin{equation}
  \beta_i \sim {\rm N}(\bar \beta, \phi^2), i = 1,\dots,s,
\end{equation} 
and
\begin{equation}
  \bar \beta \sim {\rm N}(0, \omega^2),
\end{equation}
where parameter $\omega^2$ quantifies the prior magnitude of the average effect and $\phi^2$ describes the prior degree of heterogeneity.  Equivalently, the joint prior distribution for vector $(\beta_1,\dots,\beta_r)$ can be represented by a multivariate normal distribution with mean 0 and variance-covariance matrix $W_g$, where
 \begin{equation}  
     \label{meta.phi} 
    \Wv_g = \left( \begin{array}{ccc} \phi^2+\omega^2 & \cdots & \omega^2  \\  \vdots & \ddots & \vdots \\ \omega^2 & \cdots &\phi^2+ \omega^2     \end{array}\right).
\end{equation} 
It is easy to see that $\frac{\omega^2}{\omega^2+\phi^2}$ is the prior correlation between a pair of genetic effects: when $\phi^2$ is set to 0, it corresponds to the fixed effect model; whereas setting $\omega^2=0$ implies the effects are {\it a priori} independent in all subgroups.

\section{Bayes Factor Derivation}\label{ssmr.appx}

In this section, we show the derivation of Bayes factors based on the SSMR model.

In the SSMR model, we have defined $\YYv, \XXv, \SSigv, \bvc$ and $\bvg$ in section 2 of the main text. In addition, we denote the complete collection of regression coefficients and its vectorized version by $\BBv := \{\Bv_1,\dots, \Bv_s\}$ and $\bvs :=  {\tiny \left( \begin{array}{c} \bvc \\  \bvg \\ \end{array}\right)}$, respectively.

The likelihood function of the SSMR model is given by
\begin{equation}
   \label{ssmr.likelihood}
   p(\YYv | \XXv, \BBv,\SSigv) = (2 \pi)^{-\frac{r \sum_{i=1}^s n_i}{2}} \cdot \prod_{i=1}^s |\Sigv_i|^{-\frac{n_i}{2}}\cdot \etr \left(-\frac{1}{2}\sum_{i=1}^s \Sigv_i^{-1} (\Yv_i - \Xv_i{\Bv_i})' (\Yv_i - \Xv_i{\Bv_i}) \right)
\end{equation}
where function $\etr(\cdot)$ denotes the exponential of the trace. Given the least squares estimate $\hat B_i$ for each composing MVLR, it follows that
\begin{equation}
    (\Yv_i - \Xv_i{\Bv_i})' (\Yv_i - \Xv_i{\Bv_i})  = (\Yv_i - \Xv_i \hat {\Bv_i})' (\Yv_i -  \Xv_i \hat{\Bv_i}) + (\Bv_i - \hat{\Bv_i})' (\Xv_i'\Xv_i) (\Bv_i - \hat{\Bv_i}).
\end{equation}
Note this decomposition holds even if $\Xv_i$ is rank-deficient (however, $\hat \Bv_i$ may not be unique, see \cite{Mccullagh1989}, page 82 for discussions). We denote  $\bv_i := \vect(\Bv_i')$ and $ \hat \bv_i := \vect(\hat \Bv_i')$, and use  $\bva$ and $\hbva$ to denote the sequentially concatenated vectors of $(\bv_1,\dots,\bv_s)$ and  $(\hat \bv_1,\dots, \hat \bv_s)$, respectively. 
The likelihood function (\ref{ssmr.likelihood}) can be re-written as
\begin{equation}
   \label{ssmr.likelihood2}
    \begin{aligned}
   p(\YYv | \XXv, \BBv,\SSigv) = &(2 \pi)^{-\frac{r \sum_{i=1}^s n_i}{2}} \cdot \prod_{i=1}^s |\Sigv_i|^{-\frac{n_i}{2}} \cdot \etr \left(-\frac{1}{2}\sum_{i=1}^s \Sigv_i^{-1} (\Yv_i - \Xv_i \hat {\Bv_i})' (\Yv_i - \Xv_i \hat {\Bv_i}) \right) \\
                                     & \cdot  \exp \left(-\frac{1}{2} \left(\bva -\hbva \right)' \Phiv \, \left(\bva - \hbva \right) \right), \\
   \end{aligned} 
  \end{equation}
where 
\begin{equation*}
  \Phiv = \left(\Xv_1'\Xv_1\otimes\Sigv_s^{-1}\right)\oplus \cdots \oplus \left(\Xv_s'\Xv_s\otimes\Sigv_s^{-1}\right).
\end{equation*}  
Also, by the general case of Gauss-Markov theorem, we note that ${\rm Var}(\hbva) = \Phiv^{-1}$ (In case that $\Phiv$ is singular, the Moore--Penrose pseudoinverse is applied).  

Although $\bvs$ and $\bva$ generally differ in the orders of the composing elements, they can be reconciled by a permutation operation, i.e.,
\begin{equation}
  \Pv \bva = \bvs,
\end{equation}
where $\Pv$ is a $(rps+r\sum_i^s q_i) \times (rps+r\sum_i^s q_i)$ permutation matrix. 
Furthermore, we denote 
\begin{equation*}
   \Omegav = \Pv \Phiv \Pv,
\end{equation*}
and it can be shown that 
\begin{equation}
   \label{vhbvs}
 {\rm Var}(\hbvs) =  \Omegav^{-1}.
\end{equation}

As a result,
\begin{equation}
  \begin{aligned}
   p(\YYv | \XXv, \bvs,\SSigv) = &(2 \pi)^{-\frac{r \sum_i^s n_i}{2}} \cdot \prod_i^s |\Sigv_i|^{-\frac{n_i}{2}}\cdot \etr \left(-\frac{1}{2}\sum_i^s \Sigv_i^{-1} (\Yv_i - \Xv_i \hat{\Bv_i})' (\Yv_i - \Xv_i \hat{\Bv_i}) \right) \\
                           &~\cdot  \exp \left(-\frac{1}{2} \left(\bvs -\hbvs \right)' \Omegav \left(\bvs - \hbvs \right) \right).
 \end{aligned}
\end{equation}


\subsection{Bayes Factor for Known $\Sigv$}\label{exact.bf.appx}

With $\SSigv$ known, the marginal likelihood $p(\YYv|\XXv,\SSigv)$ can be evaluated analytically, i.e.,
\begin{equation}
  p(\YYv|\XXv,\SSigv) = \int p(\YYv|\XXv,\SSigv,\bvs) p(\bvs)\,d\bvs.
\end{equation}
Recall the prior distribution defined in section 2 of the main text,
\begin{equation*}
  \bvs \sim {\rm N}( {\bf 0}, \Psiv_c \oplus \Wv_g).
\end{equation*}
Assuming $\Wv_g$ is full rank, the integration yields
\begin{equation}
  \label{int.rst1}	
  \begin{aligned}
   	  p(\YYv|\XXv,\SSigv) =  & (2 \pi)^{-\frac{r \sum_i^s n_i}{2}} \cdot \prod_i^s |\Sigv_i|^{-\frac{n_i}{2}} \cdot |\Wv_g|^{-\frac{1}{2}} \cdot |\Psiv_{c}|^{-\frac{1}{2}} \cdot |\Omegav + \Psiv_{c}^{-1} \oplus \Wv_g^{-1}|^{-\frac{1}{2}} \\
   	                    & \cdot \exp \left( - \frac{1}{2} \hbvs' \Omegav \left (  \Omegav^{-1} -   (\Omegav + \Psiv_{c}^{-1} \oplus \Wv_g^{-1})^{-1}\right) \Omegav  \hbvs \right) \\
                                    &\cdot \etr \left(-\frac{1}{2}\sum_i^s \Sigv_i^{-1} (\Yv_i - \Xv_i \hat{\Bv_i})' (\Yv_i - \Xv_i \hat{\Bv_i}) \right) , \\
   \end{aligned}	
\end{equation}	
To further simplify (\ref{int.rst1}), we decompose $\Omega$ into the following block matrix
\begin{equation*}
  \Omegav = \left( \begin{array}{cc}  \Omegav_c &  \Omegav_f \\ \Omegav_f' &  \Omegav_g    \end{array}\right),
\end{equation*}
where $\Omegav_c$ and $\Omegav_g$ match the the dimensions of the matrices $\Psiv_c$ and $\Wv_g$, respectively. By (\ref{vhbvs}), it follows that
\begin{equation}
  \label{vhbvg}
  \Vv_g^{-1} = \Omegav_g - \Omegav_f' \Omegav_c^{-1} \Omegav_f.
\end{equation}
Let 
\begin{equation*}
  \bm{\mathcal{U}} = \Omegav_g - \Omegav_f' ( \Omegav_c + \Psiv_c^{-1})^{-1}\Omegav_f + \Wv_g^{-1},
\end{equation*}
and it follows that
\begin{equation}
  |\Omegav + \Psiv_c^{-1}\oplus \Wv_g^{-1} | = |\Omegav_c+ \Psiv_c^{-1}| \cdot |\bm{\mathcal{U}}|. 
\end{equation}  
Furthermore, the matrix product, $\Omegav \left (  \Omegav^{-1} -   (\Omegav + \Psiv_{c}^{-1} \oplus \Wv_g^{-1})^{-1}\right) \Omegav $, can be represented by the block matrix $\left( \begin{array}{cc} \Av & \Bv \\ \Bv' & \Dv \\ \end{array}\right)$, where
\begin{equation*}
  \begin{aligned}
      & \Av =  \Omegav_c \left[\Iv - (\Omegav_c + \Psiv_c^{-1})^{-1}\Omegav_c \right]  - \left[\Iv - \Omegav_c (\Omegav_c + \Psiv_c^{-1})^{-1}\right]\,\Omegav_f\,\bm{\mathcal{U}}^{-1} \Omegav_f'\,\left[\Iv - (\Omegav_c + \Psiv_c^{-1})^{-1}\Omegav_c \right] ,  \\
       & B = \left[\Iv - \Omegav_c (\Omegav_c + \Psiv_c^{-1})^{-1}\right]\,\Omegav_f\,\bm{\mathcal{U}}^{-1}\Wv_g^{-1}\\
       & D = \Wv_g^{-1} - \Wv_g^{-1} \bm{\mathcal{U}}^{-1} \Wv_g^{-1} = (\bm{\mathcal{U}}-\Wv_g^{-1}) - (\bm{\mathcal{U}}-\Wv_g^{-1})\bm{\mathcal{U}}^{-1}(\bm{\mathcal{U}}-\Wv_g^{-1}).\\
  \end{aligned}
\end{equation*}  
Although the expressions are fairly complicated, when the limit $\Psiv_c^{-1} \to 0$ is taken, $\Av \to 0$ and $\Bv \to 0$.

The exact same calculations can be carried out with respect to the null model. In the end, we obtain the following marginal likelihood under $H_0$,
\begin{equation}
  \label{int.rst0}	
  \begin{aligned}
   	  P(\YYv|\XXv,\SSigv,H_0) & =  (2 \pi)^{-\frac{r \sum_i^s n_i}{2}} \cdot \prod_i^s |\Sigv_i|^{-\frac{n_i}{2}} \cdot |\Psiv_c|^{-\frac{1}{2}} \cdot |\Omegav_c + \Psiv_{c}^{-1}|^{-\frac{1}{2}} \\
                           & \cdot \exp \left( - \frac{1}{2} \tbvc' \Omegav_c \left ( \Omegav_c^{-1} -  (\Omegav_c+ \Psiv_c^{-1})^{-1}\right) \Omegav_c  \tbvc \right) \\
                          & \cdot \etr \left(-\frac{1}{2} \sum_i^s \Sigv_i^{-1} (\Yv_i - \Xv_{c,i} \tilde \Bv_i)'(\Yv_i - \Xv_{c,i} \tilde \Bv_i)\right), \\
  \end{aligned}
\end{equation}   
where $\tbvc$  and $\tilde \Bv_i$ are the MLEs of regression coefficients obtained under the null model (i.e. restricting $\bvg \equiv 0$). Note the relationship of the least squares estimates between the target and the null models:
\begin{equation}
  \tilde \Bv_i = \hat \Bv_{c,i} + (\Xv_{c,i}'\Xv_{c,i})^{-1} \Xv_{c,i}' \Xv_{g,i} \hat \Bv_{g,i},
\end{equation}
and
\begin{equation}
  \label{residual.relation}
  \begin{aligned}
  &~~~(\Yv_i - \Xv_{c,i} \tilde \Bv_i)'(\Yv_i- \Xv_{c,i} \tilde \Bv_i) - (\Yv_i -\Xv_i \hat \Bv_i)'(\Yv_i- \Xv_i \hat \Bv_i) \\
    & = {\hat \Bv_{g,i}}'\left(\Xv_{g,i}'\Xv_{g,i}-\Xv_{g,i}'\Xv_{c,i}(\Xv_{c,i}'\Xv_{c,i})^{-1}\Xv_{c,i}'\Xv_{g,i}\right) \hat \Bv_{g,i} \\
  \end{aligned} 
\end{equation}
It follows that 
 \begin{equation}
   \label{relation.eqn}
   \begin{aligned}
   &~~~\etr \left(\frac{1}{2} \sum_i^s \Sigv_i^{-1} \left[(\Yv_i - \Xv_{c,i} \tilde \Bv_i)'(\Yv_i- \Xv_{c,i} \tilde \Bv_i) - (\Yv_i - \Xv_i \hat \Bv_i)'(\Yv_i- \Xv_i \hat \Bv_i)\right]  \right) \\
   & = \exp\left(\frac{1}{2} \hbvg' \Vv_g^{-1} \hbvg \right).
   \end{aligned}
 \end{equation}
This also gives the explicit expression for $\Vv_g^{-1}$, i.e.,
\begin{equation}
  \Vv_g^{-1} = \oplus_{i=1}^s \Vv_{g,i}^{-1} = \oplus_{i=1}^s \left[\left(\Xv_{g,i}'\Xv_{g,i}-\Xv_{g,i}'\Xv_{c,i}(\Xv_{c,i}'\Xv_{c,i})^{-1}\Xv_{c,i}'\Xv_{g,i}\right) \otimes \Sigv_i^{-1}\right].
\end{equation}
Because of the block-diagonal nature of the $\Vv_g^{-1}$ matrix, the following expression also holds true
\begin{equation}
   \label{relation.eqn2}
   \hbvg' \Vv_g^{-1} \hbvg = \sum_{i=1}^s \hat \bv_{g,i}'  \Vv_{g,i}^{-1} \hat \bv_{g,i},
\end{equation}   
which provides convenience for Laplace approximation later on.

Finally, by taking the limit $\Psiv_c^{-1} \to 0$ and noting 
\begin{equation}
  \lim_{\psi_c^{-1} \to 0} \bm{\mathcal{U}} = \Vv_g^{-1} + \Wv_g^{-1},
\end{equation}
we obtain 
\begin{equation}
   \BF(\Wv_g) = |\Iv +  \Vv_g^{-1}\Wv_g|^{-\frac{1}{2}} \cdot \exp \left( \frac{1}{2} \hbvg' \Vv_g^{-1}\left[\Wv_g(\Iv +  \Vv_g^{-1}\Wv_g)^{-1}\right] \Vv_g^{-1} \hbvg \right),
\end{equation}
which proves LEMMA 1.

\subsection{Approximate Bayes Factors for Unknown $\Sigv$}\label{lap.appx}

When $\SSigv$ is unknown, we assign independent inverse Wishart priors, ${\rm IW}_r(\nu_i \Hv_i,m_i)$, to each $\Sigv_i$ and additional integrals are required for computing the marginal likelihood. More specifically, the goal is to evaluate
\begin{equation}
   p(\YYv|\XXv) = \int p(\YYv|\XXv, \SSigv) \prod_i p(\Sigv_i^{-1}) ~d\, \Sigv_1^{-1}\dots d\, \Sigv_s^{-1},
\end{equation}
where
\begin{equation}
   p(\Sigv_i^{-1}) \propto |\Sigv_i^{-1}|^{\frac{m_i-r-1}{2}} \etr\left(-\frac{1}{2} \nu_i \Hv_i\Sigv_i^{-1}\right).
\end{equation}    

The desired Bayes factor is therefore computed as
\begin{equation}
  \BF(\Wv_g) = \lim_{\Psiv_c^{-1} \to 0} \frac{\int p(\YYv|\XXv,\SSigv)\prod_i p(\Sigv_i^{-1}) ~d\, \Sigv_1^{-1}\cdots\, d\, \Sigv_s^{-1} }{\int p(\YYv|\XXv,\SSigv, H_0)\prod_i p(\Sigv_i^{-1}) ~d\, \Sigv_1^{-1}\cdots\, d\, \Sigv_s^{-1}}.
\end{equation}
By plugging in (\ref{int.rst1}) and (\ref{int.rst0}) and noting the cancellation of $|\Psiv_c|^{-\frac{1}{2}}$ terms along with the fact that 
$\Omegav^{-1} -   (\Omegav + \Psiv_{c}^{-1} \oplus \Wv_g^{-1})^{-1}$
is positive definite, it is easy to see that the remaining integrands, both are functions of $\Psiv_c^{-1}$, are bounded. It is then justified by bounded convergence theorem (BCT) to switch the limit and integration operations. As a result, we obtain 
\begin{equation}
   {\rm BF}(\Wv_g) = \frac{\int K_{H_a} \,d\, \Sigv_1^{-1}\cdots \, d\,\Sigv_s^{-1}}{\int K_{H_0} \,d\, \Sigv_1^{-1}\cdots \,d\, \Sigv_s^{-1}},
\end{equation}
where
\begin{equation}
  \label{mvlr.kha}
   \begin{aligned} 
   K_{H_a}   &= |\Iv + \Vv_g^{-1}\Wv_g|^{-\frac{1}{2}} \cdot \exp \left(\frac{1}{2} \hbvg'  \left[\Vv_g^{-1}\Wv_g(\Iv+\Vv_g^{-1}\Wv_g)^{-1} \Vv_g^{-1}\right] \hbvg \right) \\
                     & \cdot \prod_{i=1}^s |\Sigv_i^{-1}|^{\frac{n_i+m_i-q_i-r-1}{2}} \cdot \etr \left(-\frac{1}{2} \sum_{i=1}^s \Sigv_i^{-1} \left(\nu_i \Hv_i+(\Yv_i - \Xv_{c,i} \tilde \Bv_i )'(\Yv_i - \Xv_{c,i} \tilde \Bv_i)\right)\right),\\
   \end{aligned} 
\end{equation}    	 
\begin{equation} 
   \label{mvlr.kh0}
   K_{H_0}  = \prod_{i=1}^s |\Sigv_i^{-1}|^{\frac{n_i+m_i-q_i-r-1}{2}} \cdot \etr \left(-\frac{1}{2} \sum_{i=1}^s \Sigv_i^{-1} \left(\nu_i \Hv_i+(\Yv_i - \Xv_{c,i} \tilde \Bv_i )'(\Yv_i - \Xv_{c,i} \tilde \Bv_i)\right)\right), 
\end{equation}

Because $\Vv_g^{-1}$ and (potentially) $\Wv_g$ are both functions of $\SSigv$, the analytic integration of $K_{H_a}$ is generally implausible. Here we approximate the integrals of both $K_{H_a}$ and ${K_{H_0}}$ by Laplace's method. Note, although the analytic integration of $K_{H_0}$ is straightforward, it is been shown (\cite{Wen2011}) that simultaneously applying Laplace's methods to both $K_{H_a}$ and ${K_{H_0}}$ achieves better numerical accuracy for desired Bayes factor. 

Laplace's method approximates an integral with respect to a $ d \times d$ symmetric matrix $\Zv$ (or equivalently the corresponding half-vectorized $(d+1)d/2$ dimensional vector $\rm{vech}(\Zv)$) in the following way,   
\begin{equation}
 \label{laplace.approx} 
  \int_D h(\Zv) \exp \left(\, g(\Zv) \,\right)\, d\,\Zv \approx (2 \pi)^{d(d+1)/4}|\Hv_{\tiny \hat \Zv}|^{-1/2}h(\hat \Zv)\exp \left(\,g(\hat \Zv)\,\right),
\end{equation} 
where 
\begin{equation*}
 \hat \Zv = \arg\max_{\small \Zv} g(\Zv),
\end{equation*}
and $| \Hv_{\tiny \hat \Zv}|$ is the absolute value of the determinant of the Hessian matrix of the function $g$ evaluated at $\hat \Zv$.  The technical requirements on the factorization  are that $h(\cdot)$  is smooth and positively valued and $g(\cdot)$ is  smooth and obtains its unique maximum in the interior of $D$. Although different factorization schemes generally achieve different approximation accuracies for finite sample sizes, the asymptotic error bounds are typically the same. For a detailed discussion, see \cite{Butler2007} chapter 2.
 
 To evaluate the desired Bayes factor, we sequentially apply the Laplace's method with respect to each $\Sigv_i^{-1}$ for both $K_{H_a}$ and $K_{H_0}$.

\subsubsection{General Derivation}

By (\ref{relation.eqn}) and (\ref{relation.eqn2}), we note the exponential term
\begin{equation}
  {\rm tr}\left[\Sigv_j^{-1} \left(\nu_j \Hv_j + (\Yv_j - \Xv_{c,j} \tilde \Bv_j)'(\Yv_j- \Xv_{c,j} \tilde \Bv_j)\right)\right],
\end{equation}
is presented in both the alternative and the null models for each multivariate linear regression model $j$, and it can be generally decomposed into 
\begin{equation}
 \begin{aligned}
  &{\rm tr} \left[ \Sigv_j^{-1}\left(\nu_j \Hv_j + (1-\alpha_j)(\Yv_j - \Xv_{c,j} \tilde \Bv_j)'(\Yv_j- \Xv_{c,j} \tilde \Bv_j)+ \alpha_j\,(\Yv_j - \Xv_{c,j} \hat \Bv_j)'(\Yv_j- \Xv_{c,j} \hat \Bv_j)\right)\right] \\
   &+ \alpha_j\, \hat \bv_{g,i}' \Vv_{g,i}^{-1} \hat \bv_{g,i},\\
 \end{aligned}
\end{equation}
where $\alpha_j \in [0,1]$. 
Thus, when applying Laplace's method, we start by factoring $K_{H_a}$ into
\begin{equation}
  \label{lap.fac}
 K_{H_a} =  h_a(\Sigv_1^{-1},\dots,\Sigv_s^{-1})  \exp \left(\,g_a(\Sigv_1^{-1},\dots,\Sigv_s^{-1})\,\right),
\end{equation}
where
\begin{equation}
   \begin{aligned}
    h_a(\Sigv_1^{-1},\dots,\Sigv_s^{-1}) &=  |\Iv + \Vv_g^{-1}\Wv_g|^{-\frac{1}{2}} \cdot \exp \left(\frac{1}{2} \hbvg' \left[\Vv_g^{-1}\Wv_g(\Iv+\Vv_g^{-1}\Wv_g)^{-1} \Vv_g^{-1}\right]\hbvg \right)  \\
                      & \cdot \exp\left(- \frac{1}{2} \sum_{i=1}^s \alpha_i \hat \bv_{g,i}' \Vv_{g,i}^{-1} \hat \bv_{g,i}\right)
   \end{aligned} 
\end{equation}
and
\begin{equation}
  \begin{aligned}
  &~~~~g_a (\Sigv_1^{-1},\dots,\Sigv_s^{-1})  = \sum_{i=1}^s \frac{n_i+\nu_i}{2} \log |\Sigv_i^{-1}|   \\
                         &-\frac{1}{2}\sum_{i=1}^s {\rm tr} \left[ \Sigv_i^{-1} \left(\nu_i \Hv_i+\alpha_i (\Yv_i - \Xv_i \hat \Bv_i)'(\Yv_i- \Xv_i \hat \Bv_i)+ (1-\alpha_i)(\Yv_i - \Xv_i \tilde \Bv_i)'(\Yv_i- \Xv_i \tilde \Bv_i)\right)\right].
  \end{aligned}
\end{equation}

It is straightforward to show that the unique maximum of $g(\Sigv^{-1},\dots,\Sigv^{-1}_s)$ can be obtained by performing sequential analytic maximization with respect to each individual $\Sigv_i$. More specifically, the maximum is attained at
\begin{equation}
  \label{bayes.Sigma.est.appx}
  \check \Sigv_i = \frac{\nu_i}{n_i + \nu_i} \Hv_i + \frac{n_i}{n_i + \nu_i} \left[\alpha_i \hat \Sigv_i + (1-\alpha_i) \tilde \Sigv_i \right],~\forall i,
\end{equation}
where 
\begin{equation}
  \hat \Sigv_i =  \frac{1}{n_i}(\Yv_i - \Xv_i \hat \Bv_i)'(\Yv_i- \Xv_i \hat \Bv_i),
\end{equation}
and
\begin{equation}
  \tilde \Sigv_i =  \frac{1}{n_i}(\Yv_i - \Xv_{c,i} \tilde \Bv_i)'(\Yv_i- \Xv_{c,i} \tilde \Bv_i),
\end{equation}
are commonly used MLEs of $\Sigv_i$ evaluated under the target and the null models respectively.

 Following \cite {Minka2000}, it can be shown that the Hessian matrix $H_{g_a}(\Sigv_i^{-1})$ for each $\Sigv_i^{-1}$ is given by
\begin{equation}
  \begin{aligned}
  H_{g_a}(\Sigv_i^{-1}) =& \frac {{\rm d}^2 g_a}{{\rm dvech}(\Sigv_i^{-1}) {\rm dvech}(\Sigv_i^{-1})'}\\
                       =& -\frac{n_i}{2} \Dv_s'\left(\Sigv_i \otimes \Sigv_i \right) \Dv_s,
  \end{aligned}
\end{equation}       
where $\Dv_s$ denotes the duplication matrix for $s \times s$ symmetric matrices. As it is evaluated at $\check \Sigv_i^{-1}$, its absolute determinant results in the following simple form,
\begin{equation}
    |H_{g_a}(\check \Sigv_i^{-1})| = 2^{-r} n_i^{r(r+1)/2} |\check \Sigv_i|^{r+1}.
\end{equation}    

Similarly, we factor $K_{H_0}$ in the same way, i.e.,
\begin{equation}
 K_{H_0} =  h_0(\Sigv_1^{-1},\dots,\Sigv_s^{-1})  \exp (\,g_0(\Sigv_1^{-1}),\dots,\Sigv_s^{-1}\,),
\end{equation}
where 
\begin{equation}
   \label{kh0.h0}
   \begin{aligned}
   h_0(\Sigv_1^{-1},\dots,\Sigv_s^{-1}) = \exp\left(- \frac{1}{2} \sum_{i=1}^s \alpha_i \hat \bv_{g,i}' \Vv_{g,i}^{-1} \hat \bv_{g,i}\right)
  \end{aligned}
\end{equation}
and
\begin{equation}
   \label{kh0.g0}
     \begin{aligned}
  &~~~~g_0 (\Sigv_1^{-1},\dots,\Sigv_s^{-1})  = \sum_{i=1}^s \frac{n_i+\nu_i}{2} \log |\Sigv_i^{-1}|   \\
                         &-\frac{1}{2}\sum_{i=1}^s {\rm tr} \left[ \Sigv_i^{-1} \left(\nu_i \Hv_i+\alpha_i (\Yv_i - \Xv_i \hat \Bv_i)'(\Yv_i- \Xv_i \hat \Bv_i)+ (1-\alpha_i)(\Yv_i - \Xv_i \tilde \Bv_i)'(\Yv_i- \Xv_i \tilde \Bv_i)\right)\right].
  \end{aligned}
\end{equation}   	
Note that $g_0(\Sigv_1^{-1},\dots,\Sigv_s^{-1})$ and $g_a(\Sigv_1^{-1}\dots,\Sigv_s^{-1})$ are identical, $(\check \Sigv_1, \dots, \check \Sigv_s)$ also uniquely maximizes $g_0$ function.

Following (\ref{laplace.approx}), the desired Bayes factor is computed as 
 \begin{equation}
   \label{approx.bf1}
     \begin{aligned}
   \BF(\Wv_g)  =  &  |\Iv + \check \Vv_g^{-1} \check \Wv_g|^{-\frac{1}{2}}  \cdot \exp \left(\frac{1}{2} \hbvg' \check \Vv_g^{-1} \left[\check \Wv_g(\Iv+\check \Vv_g^{-1}\check \Wv_g)^{-1} \right] \check \Vv_g^{-1} \hbvg \right) \cdot \prod_{i=1}^s \left(1+O(\frac{1}{n_i})\right)
    \end{aligned}
\end{equation} 
where $\check \Vv_g^{-1}$ and $\check \Wv_g$ are the corresponding $\Vv_g^{-1}$ and $\Wv_g$ evaluated at $(\check \Sigv_1,\dots,\check \Sigv_s)$. In particular,
\begin{equation}
  \check \Vv_g^{-1} = \oplus_{i=1}^s \left[\left(\Xv_{g,i}'\Xv_{g,i}-\Xv_{g,i}'\Xv_{c,i}(\Xv_{c,i}'\Xv_{c,i})^{-1}\Xv_{c,i}'\Xv_{g,i}\right) \otimes \check \Sigv_i^{-1}\right].
\end{equation}
This leads to the final expression of $\ABF$
\begin{equation}
      \label{abf1.rst}
       \ABF(\Wv_g, \av) = |\Iv+ \check \Vv_g^{-1}\check \Wv_g|^{-\frac{1}{2}}\cdot \exp \left( \frac{1}{2} \hbvg' \check \Vv_g^{-1} \left[\check \Wv_g( \Iv + \check \Vv_g^{-1} \check \Wv_g)^{-1}\right] \check \Vv_g^{-1} \hbvg \right),
\end{equation}  
which also completes the proof for PROPOSITION 1.

\section{Computational Stability of Bayes Factor}\label{computing.appx}

In this section, we demonstrate the computational stability of the derived Bayes factors. In particular,  we show that the derived Bayes factor and its approximations can be stably evaluated even if some design matrix  $\Xv_i \in \XXv$ is rank deficient.

First, assuming $\Xv_{c,i}'\Xv_{c,i}$ can be inverted in the general sense  $\forall i = 1,\dots,s$, we define
\begin{equation}
  \Gv_i = \left(\Iv - \Xv_{c,i} \left(\Xv_{c,i}'\Xv_{c,i}\right)^{-1}\Xv_{c,i}'\right)\Xv_{g,i},
\end{equation}
and denote its $p \times n_i$ Moore-Penrose pseudo inverse matrix by $\Gv_i^{+}$. By the general least squares theory, it can be shown (regardless if $\Gv_i$ is full-rank) that  
\begin{align}
   &\hat \Bv_{g,i} = \Gv_i^{+} \Yv_i, \\
   &\hat \bv_{g,i} = \vect(\hat \Bv_{g,i}') = (\Gv_i^{+} \otimes \Iv) \vect(\Yv_i')\\
\end{align}
and 
\begin{equation}
  \Vv_{g,i}^{-1} = \left(\Gv_i'\Gv_i\right)\otimes \Sigv_i^{-1}. 
\end{equation}  
It is then follows from the general property of Moore-Penrose pseudo inverse, such that
\begin{equation}
  \begin{aligned}
   \Vv_{g,i}^{-1} \hat \bv_{g,i} & = \left[\left(\Gv_i'\Gv_i\Gv_i^{+}\right)\otimes \Sigv_i^{-1} \right] \vect(\Yv_i') \\
                                &= (\Gv_i'\otimes\Sigv_i^{-1})\vect(\Yv_i')  \\                                
                                & = \vect(\Sigv^{-1} \Yv_i' \Gv_i). 
  \end{aligned} 
\end{equation}

Finally, $\Vv_g^{-1} \hbvg$ is computed by sequentially concatenating $\Vv_{g,i}^{-1} \hat \bv_{g,i}$ for $i=1,...,s$. 
Note,  in this computational procedure
\begin{enumerate}
  \item there is no matrix inversion operation on $\Xv_{g,i}'\Xv_{g,i}$ (which we allow to be rank deficient).
  \item there is no matrix inversion operation on $\Wv_g$.
  \item matrix $(\Iv + \Vv_g^{-1}\Wv_g)$ is guaranteed positive definite.
\end{enumerate}  

In case that $\SSigv$ is unknown and some $\Xv_{g,i}$ is rank deficient, it becomes inevitable to perform Moore-Penrose pseudo-inverse of $\Xv_i'\Xv_i$ for evaluation of $\hat \Sigv_i$. This would cost the computational efficiency but unlikely affect the computational stability of the $\ABF$.

\section{Computing Bayes Factors with Singular $\Wv_g$}\label{singular.prior.appx}

We first give the proof for PROPOSITION 2 in below.
\begin{proof}  
 In case that $\SSigv$ is known, the proof is trivial by noting that there is no matrix inversion of $\Wv_g$ in the Bayes factor formula of LEMMA 1. 
 
If $\SSigv$ is unknown, the desired Bayes factor is computed by
  \begin{equation}
    \BF(\Wv_g) = \frac{\lim_{\lambda \to 0} \int K_{H_a}(\Wv_g^\dagger(\lambda)) \,d \Sigv_1^{-1}\,...\,d \Sigv_s^{-1}} {\int K_{H_0} \,d \Sigv^{-1}...\,d \Sigv_s^{-1}}, 
  \end{equation}
where the integrands $K_{H_a}$ and $K_{H_0}$ are defined in (\ref{mvlr.kha}) and (\ref{mvlr.kh0}) respectively. It should be clear that 
\begin{equation}
  K_{H_a}(\Wv_g^\dagger(\lambda)) \le \prod_{i=1}^{s} \left[|\Sigv_i^{-1}|^{\frac{n_i+m_i-q_i-r-1}{2}} \cdot \etr \left(-\frac{1}{2} \Sigv_i^{-1} \left( \Hv_i + (\Yv_i - \Xv_i \hat \Bv_i)'(\Yv_i- \Xv_i \hat \Bv_i)\right)\right)\right].
\end{equation}
Because the RHS is clearly integrable with respect to $\Sigv_1^{-1},\dots,\Sigv_s^{-1}$, by bounded convergence theorem, it follows that 
 \begin{equation}
    \BF(\Wv_g) = \frac{ \int \lim_{\lambda \to 0} K_{H_a}(\Wv_g^\dagger(\lambda)) \,d \Sigv_1^{-1}\,...\,d \Sigv_s^{-1}}{\int K_{H_0} \,d \Sigv_1^{-1}\,...\,d \Sigv_s^{-1}}. \\
  \end{equation}
Because the computation of $K_{H_a}$ does not require inversion of $\Wv_g$ and the matrix sum $(\Iv + \Vv_g^{-1}\Wv_g)$ is guaranteed to be full rank, we conclude that
\begin{equation}
   \lim_{\lambda \to 0} K_{H_a}(\Wv_g^\dagger(\lambda))  = K_{H_a}(\Wv_g),
\end{equation}
and
\begin{equation}
  \label{bf.general.singular.wg}
   \lim_{\lambda \to 0}\BF(\Wv_g^\dagger(\lambda)) = \BF(\Wv_g) = \frac{ \int K_{H_a}(\Wv_g)\,d \Sigv_1^{-1}...\,d \Sigv_s^{-1} }{\int K_{H_0} \,d \Sigv_1^{-1}...\,d \Sigv_s^{-1}}, \\
  \end{equation}
provided that $\Wv_g$ is positive semidefinite.
\end{proof}

In case $\Wv_g$ is singular, to evaluate the approximate Bayes factor using Laplace's method, we modify the factorization in (\ref{lap.fac}) to account for the imposed linear restrictions. More specifically, we factor $K_{H_a}$ into 
\begin{equation}
   \begin{aligned}
    h_a(\Sigv_1^{-1},\dots,\Sigv_s^{-1}) &=  |\Iv + \Vv_g^{-1}\Wv_g|^{-\frac{1}{2}} \cdot \exp \left(\frac{1}{2} \hbvg' \left[\Vv_g^{-1}\Wv_g(\Iv+\Vv_g^{-1}\Wv_g)^{-1} \Vv_g^{-1}\right]\hbvg \right)  \\
                      & \cdot \exp\left(- \frac{1}{2} \sum_{i=1}^s \alpha_i \hat \bv_{g,i}^{r'} \Vv_{g,i}^{-1} \hat \bv_{g,i}^r \right),
   \end{aligned} 
\end{equation}
and
\begin{equation}
  \begin{aligned}
  &~~~~g_a (\Sigv_1^{-1},\dots,\Sigv_s^{-1})  = \sum_{i=1}^s \frac{n_i+\nu_i}{2} \log |\Sigv_i^{-1}|   \\
                         &-\frac{1}{2}\sum_{i=1}^s {\rm tr} \left[ \Sigv_i^{-1} \left(\nu_i \Hv_i+\alpha_i (\Yv_i - \Xv_i \hat \Bv_i^r)'(\Yv_i- \Xv_i \hat \Bv_i^r)+ (1-\alpha_i)(\Yv_i - \Xv_i \tilde \Bv_i)'(\Yv_i- \Xv_i \tilde \Bv_i)\right)\right],
  \end{aligned}
\end{equation}  
where $\hat \Bv^r_i$ is the least squares estimate of $\Bv_i$ subject to the linear constraints imposed by $\Wv_g$ and $\hat \bv_{g,i}^r$ is the corresponding vectorized estimates. The remaining steps for Laplace's method are the same as we have shown in appendix B.2.1, however $\hat \Sigv_i$ is now taking the following form: 
\begin{equation}
   \hat \Sigv_i = \frac{1}{n_i}\left(\Hv_i+(\Yv_i - \Xv_i \hat \Bv_i^r)'(\Yv_i- \Xv_i \hat \Bv^r_i)\right).
\end{equation}

\section{Numerical Evaluation of Approximate Bayes Factors}

We perform numerical experiments to assess the finite-sample accuracies of the derived approximate Bayes factors. 

We simulate data under the SSLR model (mainly because its Bayes factors can be numerically evaluated using the adaptive Gaussian quadrature method as the number of groups ($s$) is small). Except for the very last case, our simulated data sets always have sample size $n=75$ and subgroup number $s=3$. We also vary the number of covariates for $p=2,4,8$ and $16$ in different simulations.

For each  $(n,p)$ combination, we simulate 500 data sets using the SSLR model. We intentionally choose small to modest effect sizes, for which accuracies of the Bayes factors matter most. For every simulated data set, we evaluate its ``true value" using the adaptive Gaussian quadrature procedure implemented in the GNU Scientific Library (GSL) and compare it with the $\ABF$s computed under $\av = 0, 0.5$ and $1$. These results are summarized in Figure \ref{abf1.fig} and Table \ref{abf.tbl}. 
As values of $\av$ are set to $0.5$ for all subgroups, the resulting $\ABF$s yield most accurate approximations in all cases with small sample sizes. In comparison, setting $\av=1$ tends to yield anti-conservative approximations whereas setting $\av=0$ leads to conservative approximations.    

\begin{figure}[h!t]
\begin{center}
\includegraphics[totalheight=0.60\textheight]{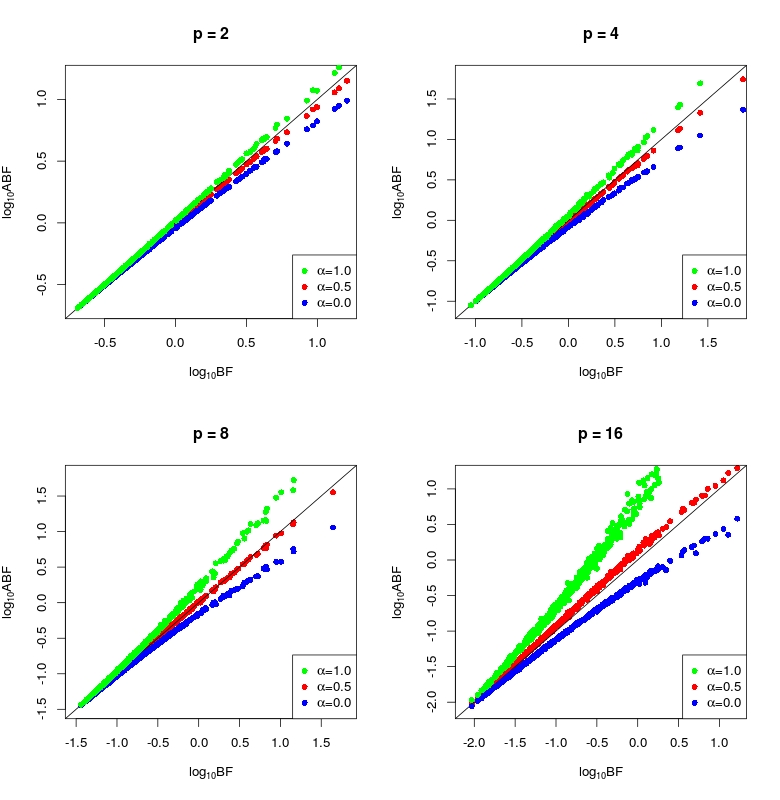}
\caption{\label{abf1.fig} Accuracy of the approximate Bayes factors with small sample sizes. Each data point on the plots represents a single comparison of the $\ABF$ of certain $\av$ value with the true value using a data set simulated from the SSLR model ($n=75$ and $s=3$). The four different panels represent the different numbers of covariates ($p$) allowed in the model.}
 \end{center}
\end{figure}

\begin{table}[ht]
\begin{center}
\begin{tabular} { | l | c c c |  }
\hline
~ & \multicolumn{3}{c |}{RMSE of $\log_{10}(\ABF)$}\\
\cline{2-4}

~ & $\av =0$ & $\av=0.5$ &  $\av=1.0$ \\
\hline 
$n=75,p=2$  &  0.032 & 0.009 & 0.016 \\
$n=75,p=4$  &  0.052 & 0.011 & 0.041 \\
$n=75,p=8$  &  0.074 & 0.008 & 0.096 \\
$n=75,p=16$ &  0.102 & 0.035 & 0.268 \\
\hline
$n=1000, p=16$ & 0.044 & 0.006 & 0.032\\

\hline  

\end{tabular}
\caption{\label{abf.tbl}Root Mean Square Errors (RMSE) of $\log_{10}(\ABF)$ for different $\av$ values under different model settings. The approximate Bayes factors are computed based on the SSLR model with three subgroups ($s = 3$) and different $(n,p)$ settings. Under each setting, we compute $\log_{10}(\ABF)$ for $\av = 0, 0.5, 1.0$ and report the RMSE by comparing the approximations with the true values.}
\end{center}
\end{table}

Finally, to demonstrate a situation that is close to the preferred asymptotic settings, we simulate data for $n=1000$ and $p=16$. The result is shown in Figure \ref{abf2.fig}. It suggests as the sample size increases, all approximations become quite accurate. 

\begin{figure}[h!t]
\begin{center}
\includegraphics[totalheight=0.4\textheight]{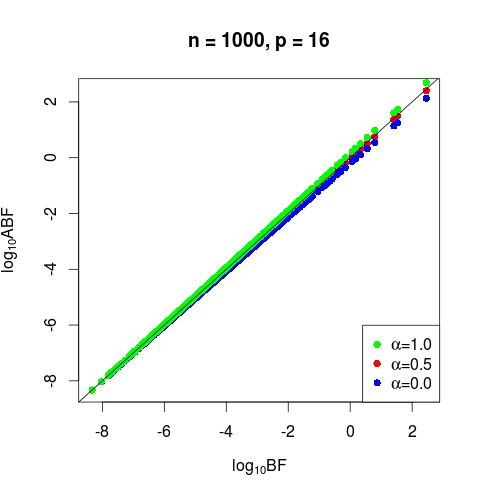}
\caption{\label{abf2.fig} Accuracy of the approximate Bayes factors when the sample size is relatively large. In this plot, the data simulated from the SSLR model with $n=1000, s=3$ and $p=16$. Approximate Bayes factors computed using  different $\av$ values all show good agreement with the true values.}
 \end{center}
\end{figure}

\section{Bayes Factor, Multivariate Test Statistics and the BIC}

In this section, we show that the derived Bayes factor and its approximations are connected to various frequentist multivariate test statistics and the BIC.                                                                                                                    

\subsection{Connection to Multivariate Test Statistics}
Under the following prior specification for $\Wv_g$:
\begin{enumerate}
  \item $\Wv_g =  c \Vv_g$, where $c$ is a positive scalar constant.
  \item $\Vv_g$ is full-rank
  \item $\Sigv_i \sim {\rm IW}(\nu_i\Hv_i, m_i)$ under the limiting conditions, $\nu_i \to 0, \forall i = 1,\dots,s$
\end{enumerate} 
It can be shown that
\begin{equation}
    \label{abf.pp}
    \ABF(\Wv_g, \av=1) = \left(\sqrt{\frac{1}{1+c}}\,\right)^{rps} \cdot \exp\left(\frac{1}{2} \cdot \frac{c}{1+c} \cdot T_{\rm wald} \right),
\end{equation}
and
\begin{equation}
    \label{abfs.pp}
    \ABF(\Wv_g,\av=0) = \left(\sqrt{\frac{1}{1+c}}\,\right)^{rps} \cdot \exp\left(\frac{1}{2} \cdot \frac{c}{1+c} \cdot T_{\rm score} \right),
\end{equation}
where $T_{\rm wald}$ and $T_{\rm score}$ represent the multivariate Wald statistic and the Rao's score statistic, respectively. Both statistics can be used for testing $H_0: \bvg = 0$ based on the SSMR model. Obtaining (\ref{abf.pp}) is straightforward. To establish (\ref{abfs.pp}), we compute the score statistic following \cite{Chen1983}. This yields  
\begin{equation}
  \begin{aligned}
  T_{\rm score} &= \sum_{i=1}^s \vect[(\Yv_i-\Xv_{c,i}\tilde \Bv_i)']'\left(\Xv_{g,i}'\Xv_{g,i} \otimes \tilde \Sigv_i^{-1} \right)\vect[(\Yv_i-\Xv_{c,i}\tilde \Bv_i)']\\
                & = \tbvc' \left[ \oplus_{i=1}^s\left(\Xv_{g,i}'\Xv_{g,i} \otimes \tilde \Sigv_i^{-1} \right)\right] \tbvc,
   \end{aligned}            
\end{equation}  
where $\tilde \Bv_i$ and $\tbvc$ are MLEs of $\Bv_i$ and $\bvc$ estimated from the null model, respectively. Under the specified conditions,
\begin{equation*}
  \begin{aligned}
   &\hbvg' \tilde \Vv_g^{-1} \left[ \tilde \Wv_g (\Iv+ \tilde \Vv_g^{-1} \tilde \Wv_g)^{-1} \right] \tilde \Vv_g^{-1} \hbvg \\
   =& \frac{c}{1+c} \cdot  \hbvg' \tilde \Vv_g^{-1} \hbvg \\
    =& \frac{c}{1+c} \cdot \tbvc' \left[ \oplus_{i=1}^s\left(\Xv_{g,i}'\Xv_{g,i} \otimes \tilde \Sigv_i^{-1} \right)\right] \tbvc.
  \end{aligned}
\end{equation*}

As a consequence, the approximate Bayes factors and the corresponding frequentist test statistics yield the same ranking for a set of candidate models. 

Albeit the connections, we do not advocate the use of these test statistics as model comparison devices in practice. Especially, caution should be taken when interpreting this prior in specific contexts: for example, \cite{Wakefield2009} and \cite{Wen2011} have shown some counter-intuitive implications of this prior in genetic applications (e.g., $|\Wv_g|$ is inversely proportional to sample sizes).

\subsection{Connections to the BIC} \label{bic.appx}

Under the conditions that  
\begin{enumerate}
 \item $\Vv_g$ and $\Wv_g$ are full-rank.
 \item $\lim_{n_i \to 0} \frac{\log |\Wv_g|}{n_i} = 0, \forall i$. 
 \item $n_i \gg p, r, s, \forall i$,.
\end{enumerate}
We show that the BIC can be derived as a rough approximation to the Bayes factor and its approximations under the SSMR model. 

First, we assume that 
\begin{equation}
  \lim_{n_i \to \infty} \frac{1}{n_i} \left(\Xv_{g,i}'\Xv_{g,i}-\Xv_{g,i}'\Xv_{c,i}(\Xv_{c,i}'\Xv_{c,i})^{-1}\Xv_{c,i}'\Xv_{g,i}\right) = \Qv_i,
\end{equation}
and $Q_i$ is also full-rank. Hence,
\begin{equation}
  \label{lim.vg}
   \lim_{n_i \to \infty}\Vv_g = \oplus_{i=1}^s \left[\frac{1}{n_i} \left(\Qv_i^{-1}\otimes \Sigv_i\right)\right].
\end{equation}  

When $\SSigv$ is known, as $n_i \to \infty $ for each $i$, based on (\ref{lim.vg})
\begin{equation}
  \label{bic.lim1}
  \lim_{n_i \to \infty, \forall i} \left(\Iv + \Vv_g^{-1} \Wv_g\right) =  \Vv_g^{-1} \Wv_g,
\end{equation}
and
\begin{equation}
   \label{bic.lim2}
   \lim_{n_i \to \infty, \forall i} \BF(\Wv_g) = |\Vv_g|^{1/2} \cdot |\Wv_g|^{-1/2} \cdot \exp\left(\frac{1}{2} \hbvg' \Vv_g^{-1} \hbvg \right).   
\end{equation}   
Note that
\begin{equation}
   \lim_{n_i \to \infty}|\Vv_g| = \prod_{i=1}^s \left( n_i^{-pr} \cdot |\Qv_i|^{-r} \cdot |\Sigv_i|^p \right), 
\end{equation}
and the likelihood ratio 
\begin{equation}
   L_1/L_0 = \frac{p(\YYv|\XXv,\hat \BBv, \SSigv)}{p(\YYv|\XXv,\tilde \BBv, \SSigv, H_0)} = \exp\left(\frac{1}{2} \hbvg' \Vv_g^{-1} \hbvg \right)
\end{equation}
It follows that
\begin{equation}
  \begin{aligned}
  \log \BF(\Wv_g) & \approx (\log L_1 - \log L_0) -\frac{pr}{2}\sum_{i=1}^s \log n_i  + \left(\frac{r}{2} \sum_{i=1}^s \log|\Qv_i| - \frac{p}{2}\sum_{i=1}^s \log|\Sigv_i| - \frac{1}{2} \log|\Wv_g| \right) \\
   & = (\log L_1 - \log L_0) -\frac{pr}{2}\sum_{i=1}^s \log n_i + O(1), \\
   & = {\rm BIC} + O(1).
  \end{aligned} 
\end{equation}

The BIC is asymptotically consistent, meaning that as sample size increases to infinity and under other suitable conditions, the BIC selects the fixed true model among a finite set of candidates with probability 1 (\cite{Haughton1988, Schwarz1978}). Consequently, our Bayes factor and its approximations also enjoy this asymptotic consistency property.  

It is worth pointing out that the BIC is not a universal approximation of Bayes factors. In our case, BIC fails to approximate desired Bayes factors with the advocated error bound if the pre-specified conditions are violated. In particular, 
\begin{enumerate}
 \item $\Wv_g$ or $\Vv_g$ is singular. Intuitively, in this case, linear constraints on parameter space would change the way that ``free" parameters are counted. Nonetheless, it is usually possible to resolve the linear constraints by transformation and re-parametrization.
 
 \item $\Wv_g$ is some function of sample sizes, e.g., this may lead that $\lim_{n_i \to 0} \frac{\log |\Wv_g|}{n_i} \ne 0, $ for some  $i$.  An example of this sort is the prior specification, $\Wv_g = c\Vv_g$. It is easy to see that BIC fails to approximate the resulting Bayes factor with the advocated error bound.
  
  \item Parameters $p,r$ and $s$ are {\em not} small comparing with sample sizes. In particular, under the high-dimensional settings, the BIC becomes a very poor approximation of the desired Bayes factor.
\end{enumerate}

When the $\SSigv$ is unknown, it can be shown that
\begin{equation}
  \log \ABF(\Wv_g, \av) \approx \frac{1}{2} \hbvg' \check \Vv_g^{-1} \hbvg  -\frac{pr}{2}\sum_{i=1}^s \log n_i  + \left(\frac{r}{2} \sum_{i=1}^s \log|\Qv_i| - \frac{p}{2}\sum_{i=1}^s \log|\check \Sigv_i| - \frac{1}{2} \log|\check \Wv_g| \right).
\end{equation}
In particular,
\begin{equation}
  \log \ABF(\Wv_g, \av=1) \approx \frac{1}{2} \hbvg' \hat \Vv_g^{-1} \hbvg  -\frac{pr}{2}\sum_{i=1}^s \log n_i  + \left(\frac{r}{2} \sum_{i=1}^s \log|\Qv_i| - \frac{p}{2}\sum_{i=1}^s \log|\hat \Sigv_i| - \frac{1}{2} \log|\hat \Wv_g| \right),
\end{equation}
Asymptotically, under the conditions stated
\begin{equation}
 \lim_{n_i \to \infty, \forall i} \, \hbvg' \hat \Vv_g^{-1} \hbvg \to \hbvg \Vv_g^{-1} \hbvg.
\end{equation}
Furthermore, it can be shown that
\begin{equation}
  \lim_{n_i \to \infty} \tilde \Sigv_i = \hat \Sigv_i + \hat \Bv_{g,i}'\Qv_i \hat \Bv_{g,i}.
\end{equation}
In general, this ensures that
\begin{equation}
  \hbvg' \check \Vv_g^{-1} \hbvg = \hbvg' \Vv_g^{-1} \hbvg + O(1).
\end{equation}
This yields our final results: under the conditions stated
\begin{equation}
   \log \ABF(\Wv_g, \av) = (\log L_1 - \log L_0) -\frac{pr}{2}\sum_{i=1}^s \log n_i + O(1).
\end{equation}

\section{Extension to Non-normal Data}

Without loss of generality, we consider a system of generalized linear models which resembles the SSLR. The MLE of the system can be numerically computed for the vectorized regression coefficients $\bvs$. Following the standard asymptotic maximum likelihood theory, the likelihood of the system can be approximated by a quadratic expansion around its maximum likelihood estimate. This can be equivalently expressed by the following asymptotic approximation,
\begin{equation}
   \label{glm.lik}
   \hbvs \,|\, \bvs \sim {\rm N}\left( \bvs\,,\, {\rm Var}(\hbvs) \right), 
\end{equation} 
where ${\rm Var}(\hbvs)$ is typically approximated using observed Fisher information. Combining with the prior distribution 
\begin{equation}
   \bvs \sim {\rm N}( {\bf 0}\,,\, \Psiv_c \oplus \Wv_g),
\end{equation}
it is then straightforward to show that the resulting Bayes factor under this setting maintains the same functional form as in LEMMA 1.

\section{MCMC Algorithm for Model Selection in MVLR} 

We implement an Markov Chain Monte Carlo (MCMC) algorithm to generate samples for posterior analysis of $\xi(\bvg)$. Here we detail the algorithm for the MVLR model, and point out that generalizing this algorithm for the general SSMR model is trivial.

\subsection{Description of Algorithm}

In the SSMR model, the posterior distribution of $\xi(\bvg)$ is given by
\begin{equation}
  \label{mcmc.post}
  \begin{aligned}
     \Pr(\xi(\bvg) \mid \Yv, \Xv) & \propto \Pr(\xi(\bvg)) \cdot p(\Yv \mid \xi(\bvg), \Xv) \\
                                            & \propto \Pr(\xi(\bvg)) \cdot \BF(\xi(\bvg)).
  \end{aligned}
\end{equation}
In the main text, we have discussed the computation of $\BF(\xi(\bvg))$. Assuming the prior distribution $\Pr(\xi(\bvg))$ is provided and easy to compute, it is straightforward to apply the Metropolis-Hastings algorithm. 
The practical difficulty in applying this algorithm in high-dimensional settings is to find an efficient proposal distribution to ensure the fast mixing of the Markov chain.  

In solving Bayesian variable selection problem in the multiple linear regression context, \cite{Guan2011} proposed a novel proposal distribution that prioritizes updates on variables showing strong marginal associations, an idea related to the {\em sure-independence screening} (\cite{Fan2008}). We generalize their idea in the context of the SSMR model. 
In our implementation, we utilize two types of simple ``local" proposal updates:  
\begin{enumerate}
\item changing the configuration of a candidate covariate.
\item swapping the configurations of two different covariates. 
\end{enumerate}
More specifically, each covariate $i$ is proposed according to a weight $w_i$ computed by
\begin{equation}
  w_i = \sum_{j=1}^{n-1} p_j {\rm BF}_i^{[j]} + p_n.
\end{equation}
The quantity ${\rm BF}_i^{[j]}$ represents the single-variate Bayes factor of covariate $i$ obtained by averaging (equally) over its all non-zero configuration Bayes factors and controlling for previously identified $(j-1)$ top association signals. We construct the weights by starting with an empty set of controlling covariates and compute the single covariate Bayes factors; we then select the covariate with the highest marginal Bayes factor into the set of covariates to be controlled for in the next round; we repeat this procedure $(n-1)$ times and in the $n$-th round, we simply assign each covariate uniform weight. Finally, we combine these weights into $w_i$ by a sequence of non-increasing probabilities $p_1 > p_2 > ... > p_n$. 
The general idea of this proposal distribution is largely due to Matthew Stephens (personal communication).
In the simulation and data application examples of this paper, we set $n=4$ and $p_1 = 0.624,~p_2 = 0.250,~p_3 = 0.125,~p_4 = 0.010$. In practice, once a SNP is proposed, we  randomly assign 85\% of the proposals to move type 1 and the 15\% of the proposals to move type 2.

In addition, when processing the posterior samples to compute posterior inclusion probabilities of covariates, we utilize Rao-Blackwellization techniques to reduce Monte Carlo variance of the estimates.

\subsection{Convergence Diagnostics}
	
We describe two convergence diagnostics of the proposed MCMC algorithm in this section. The first method is a direct adaption of \cite{Brooks2003}, which is a formal convergence testing procedure and requires running multiple chains. 
The other informal diagnostic we found useful is to utilize (\ref{mcmc.post}), which essentially is the posterior model probability up to a unknown normalizing constant. For each MCMC run, 
we compute the rank correlation between the posterior sampling frequencies and corresponding posterior scores for the sampled models. When the MCMC algorithm reaches convergence, we expect this correlation is high for the top ranked posterior models.
Our observation is that the rank correlation is indeed high, the formal testing of convergence usually becomes redundant and can be avoided. As a result, it reduces the computational burden to run multiple Markov chains.
  
\subsection{Computational Benchmark}

We benchmark the computational performance of the MCMC algorithm (implemented in C++) analyzing the imputed SNP data set of Gene C21orf57. The data set contains 4797 SNPs, 75 individuals and expression levels from three cell types. The program is running on a computer with 8-core Intel Xeon 2.13GHz processors and uses 25 Megabtypes of memory space. For 25000 burning steps and 50000 MCMC repeats, the full computation takes 14 minutes 22 seconds real time.

\section{Additional Simulation Results}

We perform additional simulation studies to fully investigate the difference in performance of BMS and LASSO. In the end, we identify two primary factors that may explain the observed performance patterns:
\begin{enumerate}
  \item the correlation structure of the random errors in the MVLR model.
  \item the prior correlation information of non-zero regression coefficients.
\end{enumerate}
Notably, vanilla version of the LASSO algorithm takes account of neither. To evaluate their individual effects on model selection, we simulate additional data for $n=100, p=250$ and $r=3$ under the MVLR model, for which all candidate covariates are independently generated.

\subsection{Impact of Error Variance Matrix}

We first investigate the impact of the error variance on model selection. To do so, we simulate independent regression coefficients across subgroups for each selected covariate, but alter the error variance matrix $\Sigma$ for the MVLR model. In particular, we use the following three different settings for the $\Sigma$ matrix:
\begin{enumerate}
  \item $\Sigv = \sigma^2 \Iv$
  \item $\Sigv$ is diagonal, but the diagonal elements are unequal (i.e., unequal error variances in different subgroups).
  \item $\Sigv$ has non-zero correlations between subgroups and unequal diagonal elements.
\end{enumerate}
In all three settings, we run both LASSO and BMS on the simulated data sets. For BMS, we assume the prior effect sizes are independent within each covariate in all cases; and in specifying the Wishart prior for $\Sigv$, we set $\Hv \to 0$ and $\nu \to 0$ (i.e. $\Sigv$ are directly estimated from the data with essentially no prior influence).

We plot the trade-off between the true positives and false positives from both methods in Figure \ref{supsim1.fig}. Our result indicates that when $\Sigv = \sigma^2 \Iv$, the two methods perform very similarly. However, as the true $\Sigma$ departs further away from the diagonal and equal variance structure, the performance of LASSO becomes worse. In comparison, the performance of BMS is stable in all three settings. 
\begin{figure}[h!t]
\begin{center}
\includegraphics[totalheight=0.30\textheight]{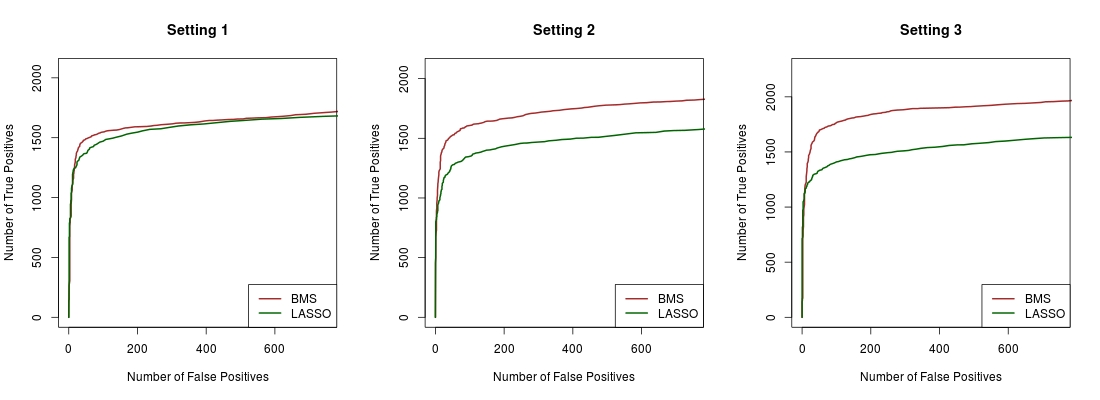}
\caption{\label{supsim1.fig}Evaluation of impact of error variance on model selection methods. In setting 1, the true $\Sigv = \sigma^2 \Iv$; in setting 2, the true $\Sigv$ is  diagonal but with unequal diagonal elements; in setting 3, the true $\Sigv$ has the most general form, with non-zero positive correlations and unequal diagonal elements. BMS has similar performance across the three settings, LASSO seems performing worse when the true $\Sigv$ departs further away from $ \sigma^2 \Iv$. }
 \end{center}
\end{figure}    

\cite{Rothman2010} also discovered the structure of $\Sigv$ matrix has significant impacts on the performance of regularized model selection method. As a remedy, they propose to regularize $\Sigv$ matrix jointly with $\bv$ in the $L_1$ penalty term. However in our context, $\Sigv$ is considered to be low dimensional ($r=3$) and the motivation to regularize $\Sigv$ is unclear to us.

\subsection{Importance of Prior Information}

We also examine the importance of utilizing prior correlation information of non-zero coefficients on the performance of model selection. Again, we limited our comparisons to BMS and LASSO using only simulated independent covariate data. Furthermore, we use $\Sigv = \sigma^2 \Iv$ to generate random errors for the MVLR model in this part of the simulation study. 

We create two different schemes in generating regression coefficients. The first scheme is the same as we described in the main text (i.e., conditioning on a non-zero configuration, $\gav = (111)$ is with probability 0.50 and others are equally likely). In the second scheme, we assign the activity configuration $\gav = (111)$ with probability 1 to the selected covariate. For the $i$th selected covariate, the effect sizes in the three subgroups are subsequently simulated from ${\rm N}(\bar \beta_i, \frac{{\bar \beta_i}^2}{100})$, where $\bar \beta_i$ is drawn from a ${\rm N}(0,1)$ distribution. The resulting correlation structure of regression coefficients is most similar to what have been observed in a meta-analysis.  

We run both BMS and LASSO on 200 data sets simulated in each scheme. To specify the distribution of non-zero activity configurations for BMS, we use both the default ``objective" prior (which assigns equal probability mass to each non-zero activity configuration) and the ``perfect" prior (which is the true generative distribution of the simulation data sets).

We show the simulation results in Figure \ref{supsim2.fig} by plotting the trade-off between the true positives and the false positives for each method. The results show that BMS with perfect prior information always achieves the best performance. (We again emphasize that in many genomic applications, it is possible to accurately estimate this ``perfect" prior from data, see examples from \cite{Flutre2012}). Although the ``objective" prior is clearly not optimal, because it captures the correlations between non-zero effects within a covariate, it still outperforms LASSO in both cases. Finally, we expect a prior assuming independence of effects of regression coefficients (i.e. a diagonal $\Gammav_g$ matrix) will behave similarly to LASSO, based on our observation in setting 1 of Figure \ref{supsim1.fig}. 
Therefore, we conclude that the performance of model selection methods are likely to have significant improvement if the {\it a priori} information in data can be accurately utilized.   

\begin{figure}[h!t]
\begin{center}
\includegraphics[totalheight=0.32\textheight]{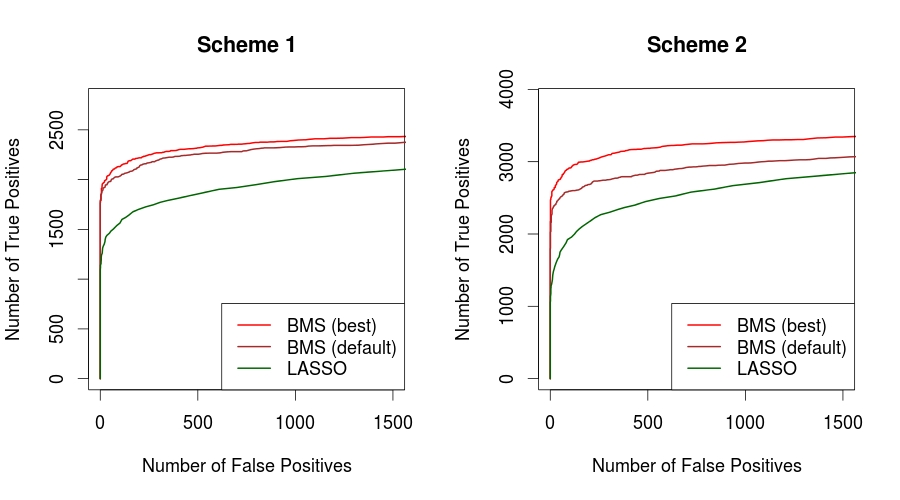}
\caption{\label{supsim2.fig}Evaluation of impact of prior information on model selection methods. Scheme 1 and 2 correspond to two distinct generating distributions used for simulating data. BMS(best) is our Bayesian model selection method using the true generative distribution as the prior, whereas BMS(default) uses an ``objective" prior. In scheme 1, the objective prior is ``closer" to the truth than in scheme 2. LASSO does not utilize the prior correlation information and essentially assumes that the regression coefficients are {\it a priori} independent.}
 \end{center}
\end{figure}

\section{Single SNP Analysis Result for Gene C21orf57}  

In this section, we show the single SNP analysis results of the eQTL mapping for gene C21orf57 using Dimas data. More specifically, the aim is to examine the results of the tissue specificity inference from our BMS approach. 

As a visual diagnostic, we first fit a simple linear regression model for each SNP in each cell type, we then examine the resulting regression coefficients across all three cell types for each SNP using a forest plot. We show the results for the three distinct signals identified by the BMS approach in Figure \ref{forest.figure}. By this simple diagnostic, the tissue specificity inference seems intuitively sensible. 

\begin{figure}[h!t]
\begin{center}
\includegraphics[totalheight=0.4\textheight]{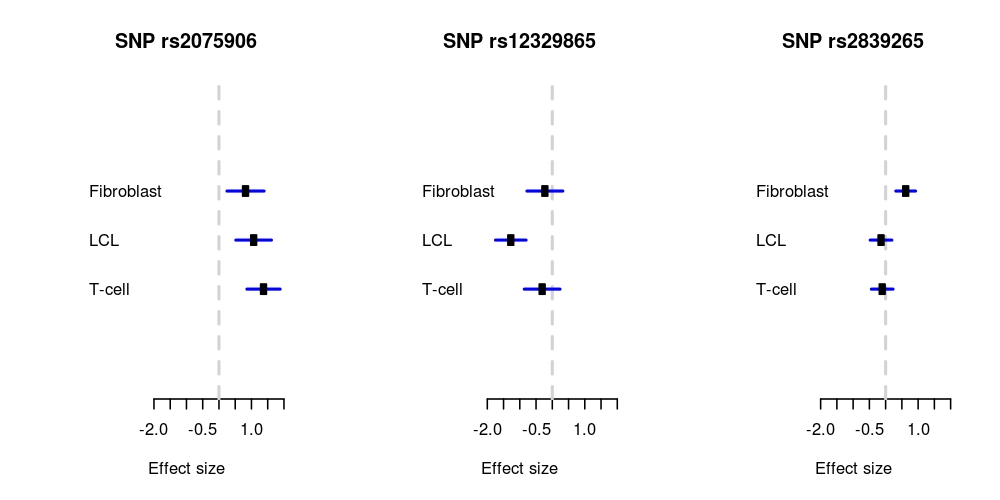}
\caption{\label{forest.figure} Examining the single-SNP effects of each identified eQTL in each separate cell type for Gene C21orf57. For each SNP in each cell type, we obtain the estimates of the effect size and its standard error by fitting a simple linear regression model. The estimated effect sizes and their corresponding 95\% confidence intervals are plotted for different cell types in a forest plot for each SNP. SNP rs2839265 has shorter intervals because its minor allele frequency (0.28) is greater than the frequencies of rs2075906 (0.09) and rs12329865 (0.11). Recall our method infers that rs2075906 is tissue consistent; SNPs rs12329865 and rs2839265 are LCL-specific and Fibroblast-specific eQTLs, respectively.}
\end{center}
\end{figure}	
 
\bibliographystyle{natbib}  
\bibliography{bvscls_arxiv}

\begin{thebibliography}{}

\bibitem[Brooks {\em et~al.}(2003)Brooks, Giudici, and Philippe]{Brooks2003}
Brooks, S.~P., Giudici, P., and Philippe, A. (2003).
\newblock {Nonparametric Convergence Assessment for MCMC Model Selection}.
\newblock {\em Journal of Computational and Graphical Statistics\/}, {\bf
  12}(1), 1--22.

\bibitem[Butler(2007)Butler]{Butler2007}
Butler, R. (2007).
\newblock {\em {S}addlepoint {A}pproximations with {A}pplications\/}.
\newblock Cambridge University Press, 1st edition.

\bibitem[Carvalho and Scott(2009)Carvalho and Scott]{Carvalho2009}
Carvalho, C.~M. and Scott, J.~G. (2009).
\newblock {Objective Bayesian model selection in Gaussian graphical models.}
\newblock {\em Biometrika\/}, {\bf 96}, 497--512.

\bibitem[Chen(1983)Chen]{Chen1983}
Chen, C.-F. (1983).
\newblock {Score Tests for Regression Models}.
\newblock {\em Journal of the American Statistical Association\/}, {\bf
  78}(381), 158--161.

\bibitem[Dawid and Lauritzen(1993)Dawid and Lauritzen]{Dawid1993}
Dawid, A.~P. and Lauritzen, S.~L. (1993).
\newblock {Hyper markov laws in the statistical analysis of decomposable
  graphical models }.
\newblock {\em Annals of Statistics\/}, {\bf 21}, 1272--1317.

\bibitem[DiCiccio {\em et~al.}(1997)DiCiccio, Kass, Raftery, and
  Wasserman]{Diciccio1997}
DiCiccio, T.~J., Kass, R.~E., Raftery, A., and Wasserman, L. (1997).
\newblock {Computing Bayes Factors by Combining Simulation and Asymptotic
  Approximations}.
\newblock {\em Journal of the American Statistical Association\/}, {\bf
  92}(439), 903--915.

\bibitem[Dimas {\em et~al.}(2009)Dimas, Deutsch, Stranger, Montgomery, {\em
  et~al.}]{Dimas2009}
Dimas, A.~S., Deutsch, S., Stranger, B.~E., Montgomery, S.~B., {\em et~al.}
  (2009).
\newblock Common regulatory variation impacts gene expression in a cell
  type-dependent manner.
\newblock {\em Science\/}, {\bf 325}(5945), 1246--1250.

\bibitem[Ding {\em et~al.}(2010)Ding, Gudjonsson, Liang, Stuart, Li, {\em
  et~al.}]{Ding2011}
Ding, J., Gudjonsson, J.~E., Liang, L., Stuart, P.~E., Li, Y., {\em et~al.}
  (2010).
\newblock {Gene Expression in Skin and Lymphoblastoid Cells: Refined
  Statistical Method Reveals Extensive Overlap in cis-eQTL Signals}.
\newblock {\em American Journal of Human Genetics\/}, {\bf 87}(6), 779--789.

\bibitem[Fan and Lv(2008)Fan and Lv]{Fan2008}
Fan, J. and Lv, J. (2008).
\newblock {Sure independence screening for ultrahigh dimensional feature space
  (with discussion)}.
\newblock {\em Journal of the Royal Statistical Society - Series B: Statistical
  Methodology\/}, {\bf 70}(5), 849–--911.

\bibitem[Flutre {\em et~al.}(2013)Flutre, Wen, Pritchard, and
  Stephens]{Flutre2012}
Flutre, T., Wen, X., Pritchard, J.~K., and Stephens, M. (2013).
\newblock { A statistical framework for joint eQTL analysis in multiple
  tissues}.
\newblock {\em PLoS Genetics\/}, {\bf 9}(5), e1003486.

\bibitem[Fridley(2009)Fridley]{Fridley2009}
Fridley, B.~L. (2009).
\newblock Bayesian variable and model selection methods for genetic association
  studies.
\newblock {\em Genetic Epidemiology\/}, {\bf 33}(1), 27--37.

\bibitem[Guan and Stephens(2011)Guan and Stephens]{Guan2011}
Guan, Y. and Stephens, M. (2011).
\newblock Bayesian variable selection regression for genome-wide association
  studies, and other large-scale problems.
\newblock {\em Annals of Applied Statistics\/}, {\bf 5}(3), 1780--1815.

\bibitem[Haughton(1988)Haughton]{Haughton1988}
Haughton, D. M.~A. (1988).
\newblock {On the Choice of a Model to Fit Data from an Exponential Family}.
\newblock {\em The Annals of Statistics\/}, {\bf 16}(1), 342--355.

\bibitem[Howie {\em et~al.}(2009)Howie, Donnelly, and Marchini]{Howie2009}
Howie, B.~N., Donnelly, P., and Marchini, J. (2009).
\newblock {A flexible and accurate genotype imputation method for the next
  generation of genome-wide association studies.}
\newblock {\em PLoS Genetics\/}, {\bf 5}(6).

\bibitem[Johnson(2005)Johnson]{Johnson2005}
Johnson, V.~E. (2005).
\newblock Bayes factors based on test statistics.
\newblock {\em Journal of the Royal Statistical Society - Series B: Statistical
  Methodology\/}, {\bf 67}(5), 689--701.

\bibitem[Johnson(2008)Johnson]{Johnson2008}
Johnson, V.~E. (2008).
\newblock Properties of bayes factors based on test statistics.
\newblock {\em Scandinavian Journal of Statistics\/}, {\bf 35}(2), 354--368.

\bibitem[Kass and Raftery(1995)Kass and Raftery]{Kass1995}
Kass, R.~E. and Raftery, A.~E. (1995).
\newblock {Bayes Factors}.
\newblock {\em Journal of the American Statistical Association\/}, {\bf
  90}(430), 773--795.

\bibitem[Liang {\em et~al.}(2008)Liang, Paulo, Molina, Clyde, and
  Berger]{Liang2008}
Liang, F., Paulo, R., Molina, G., Clyde, M.~A., and Berger, J.~O. (2008).
\newblock {Mixtures of g Priors for Bayesian Variable Selection}.
\newblock {\em Journal of the American Statistical Association\/}, {\bf
  103}(481), 410--423.

\bibitem[McCullagh and Nelder(1989)McCullagh and Nelder]{Mccullagh1989}
McCullagh, P. and Nelder, J. (1989).
\newblock {\em {G}eneralized {L}inear {M}odels\/}.
\newblock Chapman and Hall, 2nd edition.

\bibitem[Minka(2000)Minka]{Minka2000}
Minka, T.~P. (2000).
\newblock Old and new matrix algebra useful for statistics.

\bibitem[Mitchell and Beauchamp(1988)Mitchell and Beauchamp]{Mitchell1988}
Mitchell, T.~J. and Beauchamp, J.~J. (1988).
\newblock {Bayesian Variable Selection in Linear Regression}.
\newblock {\em Journal of the American Statistical Association\/}, {\bf
  83}(404), 1023--1032.

\bibitem[Peng {\em et~al.}(2010)Peng, Zhu, Bergamaschi, Han, Noh, Pollack, and
  Wang]{Peng2009}
Peng, J., Zhu, J., Bergamaschi, A., Han, W., Noh, D.-Y., Pollack, J.~R., and
  Wang, P. (2010).
\newblock {Regularized multivariate regression for identifying master
  predictors with application to integrative genomics study of breast cancer}.
\newblock {\em Annals of Applied Statistics\/}, {\bf 4}(1), 53--77.

\bibitem[Raftery(1996)Raftery]{Raftery1996}
Raftery, A.~E. (1996).
\newblock {Approximate Bayes factors and accounting for model uncertainty in
  generalised linear models}.
\newblock {\em Biometrika\/}, {\bf 83}(2), 251--266.

\bibitem[Rothman {\em et~al.}(2010)Rothman, Levina, and Zhu]{Rothman2010}
Rothman, A.~J., Levina, E., and Zhu, J. (2010).
\newblock {Sparse Multivariate Regression With Covariance Estimation}.
\newblock {\em Journal of Computational and Graphical Statistics\/}, {\bf
  19}(4), 947--962.

\bibitem[Saville and Herring(2009)Saville and Herring]{Saville2009}
Saville, B.~R. and Herring, A.~H. (2009).
\newblock {Testing random effects in the linear mixed model using approximate
  bayes factors.}
\newblock {\em Biometrics\/}, {\bf 65}(2), 369--376.

\bibitem[Schwarz(1978)Schwarz]{Schwarz1978}
Schwarz, G.~E. (1978).
\newblock {Estimating the dimension of a model}.
\newblock {\em Annals of Statistics\/}, {\bf 6(2)}, 461--464.

\bibitem[Scott-Boyer {\em et~al.}(2012)Scott-Boyer, Imholte, Tayeb, Labbe,
  Deschepper, and Gottardo]{Scott-Boyer2012}
Scott-Boyer, M.~P., Imholte, G.~C., Tayeb, A., Labbe, A., Deschepper, C.~F.,
  and Gottardo, R. (2012).
\newblock {An integrated hierarchical Bayesian model for multivariate eQTL
  mapping}.
\newblock {\em Statistical Applications in Genetics and Molecular Biology\/},
  {\bf 11}(4).

\bibitem[Servin and Stephens(2007)Servin and Stephens]{Servin2007}
Servin, B. and Stephens, M. (2007).
\newblock Imputation-based analysis of association studies: candidate regions
  and quantitative traits.
\newblock {\em PLoS genetics\/}, {\bf 3}(7), e114.

\bibitem[Stephens(2010)Stephens]{Stephens2012}
Stephens, M. (2010).
\newblock A unified framework for testing multiple phenotypes for association
  with genetic variants.
\newblock {\em Presented at the 60th Annual Meeting of The American Society of
  Human Genetics, Washington D.C.}

\bibitem[Stephens and Balding(2009)Stephens and Balding]{Stephens2009}
Stephens, M. and Balding, D.~J. (2009).
\newblock {Bayesian statistical methods for genetic association studies.}
\newblock {\em Nature Review Genetics\/}, {\bf 10}(10), 681--90.

\bibitem[Stingo {\em et~al.}(2011)Stingo, Chen, Tadesse, and
  Vannucci]{Stingo2011}
Stingo, F.~C., Chen, Y.~A., Tadesse, M.~G., and Vannucci, M. (2011).
\newblock {Incorporating biological information into linear models: A Bayesian
  approach to the selection of pathways and genes}.
\newblock {\em Annals of Applied Statistics\/}, {\bf 5}(3), 1978--2002.

\bibitem[Tibshirani(1996)Tibshirani]{Tibishirani1996}
Tibshirani, R. (1996).
\newblock {Regression shrinkage and selection via the lasso.}
\newblock {\em Journal of the Royal Statistical Society, Series B\/}, {\bf 58},
  267--288.

\bibitem[Tibshirani {\em et~al.}(2005)Tibshirani, Saunders, Rosset, Zhu, and
  Knight]{Tibshirani2005}
Tibshirani, R., Saunders, M., Rosset, R., Zhu, J., and Knight, K. (2005).
\newblock {Sparsity and smoothness via the fused lasso}.
\newblock {\em Journal of the Royal Statistical Society - Series B: Statistical
  Methodology\/}, {\bf 67}(1), 91--108.

\bibitem[Veyrieras {\em et~al.}(2008)Veyrieras, Kudaravalli, Kim, Dermitzakis,
  Gilad, Stephens, and Pritchard]{Veyrieras2008}
Veyrieras, J., Kudaravalli, S., Kim, S.~Y., Dermitzakis, E.~T., Gilad, Y.,
  Stephens, M., and Pritchard, J.~K. (2008).
\newblock High-resolution mapping of expression-qtls yields insight into human
  gene regulation.
\newblock {\em PLoS Genetics\/}, {\bf 4}(10).

\bibitem[Wakefield(2009)Wakefield]{Wakefield2009}
Wakefield, J. (2009).
\newblock Bayes factors for genome-wide association studies: comparison with
  p-values.
\newblock {\em Genetic Epidemiology\/}, {\bf 33}(1), 79--86.

\bibitem[Wen and Stephens(2011)Wen and Stephens]{Wen2011}
Wen, X. and Stephens, M. (2011).
\newblock Bayesian methods for genetic association analysis with heterogeneous
  subgroups: from meta-analyses to gene-environment interactions.
\newblock {\em arXiv pre-print: 1111.1210\/}.

\bibitem[Wilson {\em et~al.}(2010)Wilson, Iversen, Clyde, Schmidler, and
  Schildkraut]{Wilson2010}
Wilson, M.~A., Iversen, E.~S., Clyde, M.~A., Schmidler, S.~C., and Schildkraut,
  J.~M. (2010).
\newblock Bayesian model search and multilevel inference for snp association
  studies.
\newblock {\em Annals of Applied Statistics\/}, {\bf 4}(3), 1342--1364.

\bibitem[Wu {\em et~al.}(2009)Wu, Chen, Hastie, Sobel, and Lange]{Wu2009}
Wu, T., Chen, Y., Hastie, T., Sobel, E., and Lange, K. (2009).
\newblock Genome-wide association analysis by lasso penalized logistic
  regression.
\newblock {\em Bioinformatics\/}, {\bf 25}(6), 714--721.

\bibitem[Xu {\em et~al.}(2009)Xu, Wang, Li, and Xu]{Xu2009}
Xu, C., Wang, X., Li, Z., and Xu, S. (2009).
\newblock {Mapping QTL for multiple traits using Bayesian statistics}.
\newblock {\em Genetics Research\/}, {\bf 91}, 23--37.

\bibitem[Yuan and Lin(2006)Yuan and Lin]{Yuan2006}
Yuan, M. and Lin, Y. (2006).
\newblock {Model Selection and Estimation in Regression with Grouped
  Variables.}
\newblock {\em Journal of the Royal Statistical Society - Series B: Statistical
  Methodology\/}, {\bf 68}(1), 49--67.

\end{thebibliography}

\end{document}